\documentclass[11pt]{article}
\usepackage[utf8]{inputenc}
\usepackage{csquotes}
\usepackage{setspace}
\usepackage[dvipsnames]{xcolor}
\usepackage[margin=1in]{geometry}
\usepackage{hyperref}
\usepackage{amsmath, amssymb, amscd, amsthm, amsfonts}
\usepackage{graphicx}
\usepackage{adjustbox}
\usepackage{hyperref}
\usepackage{cleveref}
\usepackage{amsthm}
\usepackage[utf8]{inputenc}
\usepackage[english]{babel}
\usepackage{proof}
\usepackage{ninecolors}
\usepackage{enumitem}

\usepackage{proof}
\usepackage{tikz-cd}
\usepackage{xcolor}
\usepackage{thmtools, thm-restate}
\tikzcdset{scale cd/.style={every label/.append style={scale=#1},
		cells={nodes={scale=#1}}}}

\newtheorem{theorem}{Theorem}[section]

\theoremstyle{definition}
\newtheorem{definition}{Definition}[section]

\title{Simplicial Actions for Distributed Protocols}
\author{Philip Sink}
\date{}

\newcommand{\M}{\mathcal{M}}

\newcommand{\commentout}[1]{}
\newcommand{\defin}[1]{\textbf{#1}}

\renewcommand{\phi}{\varphi}

\newcommand{\lthen}{\rightarrow}

\newcommand{\F}{\mathcal{F}}
\renewcommand{\M}{\mathcal{M}}
\renewcommand{\L}{\mathcal{L}}

\usepackage{amssymb,tikz}

\begin{document}
	
	\maketitle
	
\begin{abstract}
	
	This paper captures and extends some of the core results from the tech memo ``A New Semantics for Belief Revision in Simplicial Complexes''. \cite{Memo} As such, we set out to explore the implementation of action models in the setting of simplicial semantics for modal logic. Such an idea is not entirely new to the literature, showing up in both \cite{SimpDEL} and \cite{KaSC}. However, we will explore action models in a more general setting. In particular, we will allow for action models for simplicial models for \textit{belief}, as in \cite{BelSimp}. This will let us incorporate the notion of belief revision, as developed in \cite{SimpBelRev}, into these action models. Moreover, we explicitly connect action models in the simplicial setting to distributed protocols as defined in \cite{DCTCT}. We conclude with some speculation on how we might interpret distributed protocols with revision.
	
\end{abstract}

\section{Introduction} \label{3sec:int}

Seeing as the impetus for Simplicial Semantics is the use of combinatorial topology in distributed computing, most of that literature acknowledges and is indebted to this connection already. \cite{Death,DoA,KaSC,SimpDEL,HG,FA,DCTCT} In general, the procedures for using action models to describe distributed protocols are understood but new, showing up in \cite{SimpDEL}, \cite{KaSC}, \cite{KaG}, \cite{ActionDC}, \cite{CPM}, and \cite{CPaAM}. Moreover, the papers \cite{KaSC}, \cite{BelSimp}, and \cite{SimpBel} are already attempts to give a semantics for belief, or non-factive (defeasible) knowledge. Our paper will seek to build on this work by implementing simplicial actions in the setting of belief models. In doing so, we will be able to describe actions where the agents update via belief revision as described in \cite{SimpBelRev}. Moreover, we will explicitly connect these action models to distributed protocols as defined in \cite{DCTCT}. 

The paper will proceed as follows. In Section \ref{3sec:simpsem}, we will outline the basic definitions for simplicial semantics used throughout the paper. In Section \ref{3sec:SimpProt}, we explore the basics of connecting distributed protocols to combinatorial topology. In Section \ref{3sec:ActProt}, we explore how action models in the non-simplicial setting have been used to model protocols. In Section \ref{3sec:SimpAct}, we define action models for both simplicial models of knowledge and belief, and also define the latter where agents update via belief revision. We also give examples of some existing protocols and show how to model them using simplicial actions.

\section{Simplicial Semantics}\label{3sec:simpsem}

This section will give the particulars for how we will define simplicial semantics throughout the paper. For a more thorough overview, we recommend reading \cite{KaSC}, \cite{BelSimp}, or \cite{SimpBelRev}. 

The literature on simplicial semantics divides roughly into two approaches to defining valuations for propositional atoms. Loosely speaking, the first approach assigns truth values directly to the facets \cite{Death,FA,SimpSet},	while the second ``vertex-based'' approach assigns them (in a partial way) to the vertices and then ``lifts'' them to facets. \cite{DoA,KaSC,SimpDEL,HG} This paper will focus on the latter approach.

Let $\mathfrak{P}$ be a countable set of propositional atoms, and $Ag$ a finite set of agents. Let $N$ be a set of of \defin{nodes}, $V:\rightarrow Ag$ a function called the \defin{coloring} function, and $L:N\rightarrow 3^\mathfrak{P}$ a function which assigns each node to a set of literals, which we call the \defin{assignment}. The interpretation is that $L(n)(P)=1$ if and only if $P$ is associated with $n$, $L(n)(P)=0$ if and only if $\neg P$ is associated with $n$, and $L(n)(P)=2$ if and only if neither is associated with $n$. $L$ is the only difference between their presentation here as compared to \cite{BelSimp}. As mentioned above, $L$ assigns propositions to nodes.

Our first key idea is that we can use $N$, $V$, and $L$ to create a kind of \defin{maximal} simplicial complex. Like much of the previous literature, we will assume our simplicial complexes are uniquely colored. Specifically, each facet of our complexes has a dimension of size $|Ag|$, and no two nodes are associated to the same agent. Put formally, if $X\in S$ is such that for all $Y\in S$, if $X\subseteq Y$ then $X=Y$, we have that $|X|=|Ag|$, and, if $x,y\in X$, then if $V(x)=V(y)$, we have that $x=y$. We call this condition UCF for ``Uniquely Colored Facets''. The \defin{Maximal Complex} is the set of subsets of $N$ such that the associated logical content with that subset is consistent, and it satisfies the UCF condition. That is, it's the subsets $x\subseteq N$ such that $|x|=|A|$, The set of formulas assigned to $x$ by $L$ is consistent, and for all $u,v\in x$, $V(u)\neq V(v)$. We refer to $\mathfrak{M}((N,V,L)$ as the \defin{Maximal Complex} of $N$, $V$, and $L$. When the context is clear, we will refer to it simply as the maximal complex. The maximal complex can be defined set theoretically as follows:

\begin{align*}
	\mathfrak{M}(N,V,L):=\{y\in 2^N~|~\exists x\in 2^N (&y\subseteq x\\&\wedge\neg(\exists P\in\mathfrak{P}(\exists u,v\in x(L(u)(P)=1\wedge L(v)(P)=0))) \\&\wedge (|x|=|Ag|)
	\\&\wedge (\forall u,v\in x(V(u)\neq V(v))))\}
\end{align*}

Note that $\mathfrak{M}(N,V,L)$ is a simplicial complex. This is because, if $Y\in \mathfrak{M}(N,V,L)$, there is a set $X$ such that $Y\subseteq X$ and $X$ satisfies certain properties. If $Z\subseteq Y$, then the same set $X$ suffices to show that $Z\in \mathfrak{M}(N,V,L)$. All complexes we consider in this paper will be UCF subcomplexes of the maximal complex. 

If $S$ is a simplicial complex, let $\mathcal{F}(S)$ denote the facets of $S$. Since all simplicial complexes we consider are UCF, we can define projection functions $\pi_a:\mathcal{F}(S)\rightarrow N$ given by $\pi(a)(X)=V^{-1}(a)\cap X$. That is, $\pi_a$ maps each facet to the unique $a$-colored node it contains.

We can now define a \defin{Simplicial Model For Knowledge}:

\begin{definition}[Simplicial Model for Knowledge]
	A \defin{Simplicial Model for Knowledge} $\M$ is a tuple $(N,V,L,S)$ where $N$ is a nonempty set of nodes, $V$ is a coloring function, $L$ is an assignment function, and $S$ is a UCF subcomplex of $\mathfrak{M}(N,V,L)$.
\end{definition}

In this setting, the language $\L_{K}(Ag)$ recursively defined by
$$\varphi ::= P \, | \, \bot \, | \, \phi \lthen \psi \, | \, K_{a}\phi,$$
where $P \in \mathfrak{P}$ and $a \in Ag$, can be interpreted in simplicial models for knowledge as follows for any $X\in\F(S)$:
\begin{align*}
	\M,X & \models P \text{ iff } \exists a\in Ag(L(\pi_a(X))(P)=1)\\
	\M,X & \nvDash \bot\\
	\M,X & \models \phi \lthen \psi \text{ iff } \M,X \models \phi \text{ implies } \M,X \models \psi\\
	\M,X & \models K_a \phi \text{ iff } \forall Y \in \F(S) \text{ if } \pi_a(Y) = \pi_a(X) \text{ then } \M,Y \models \phi
\end{align*}

In order to incorporate belief revision, as we will do in Section \ref{3sec:Rev}, it will be necessary to talk about models for belief rather than models for knowledge. The following definition is borrowed from \cite{SimpBelRev}:

\begin{definition}[Simplicial Model for Belief]
	A \defin{Simplicial Model for Belief} $\M$ is a tuple $(N,V,L,S,\{S_a\}_{a\in Ag})$ where $N$ is a nonempty set of nodes, $V$ is a coloring function, $L$ is an assignment function, $S$ is a UCF subcomplex of $\mathfrak{M}(N,V,L)$, and for each $a\in Ag$, $S_a$ is a UCF subcomplex of $S$. Often, we will leave $S$ unspecified. In this case, we assume that $S=\mathfrak{M}(N,V,L)$.
\end{definition}

In this setting, the language $\L_{B}(Ag)$ recursively defined by
$$\varphi ::= P \, | \, \bot \, | \, \phi \lthen \psi \, | \, B_{a}\phi,$$
where $P \in \mathfrak{P}$ and $a \in Ag$, can be interpreted in simplicial models for knowledge as follows for any $X\in\F(S)$:
\begin{align*}
	\M,X & \models P \text{ iff } \exists a\in Ag(L(\pi_a(X))(P)=1)\\
	\M,X & \nvDash \bot\\
	\M,X & \models \phi \lthen \psi \text{ iff } \M,X \models \phi \text{ implies } \M,X \models \psi\\
	\M,X & \models B_a \phi \text{ iff } \forall Y \in \F(S_a) \text{ if } \pi_a(Y) = \pi_a(X) \text{ then } \M,Y \models \phi
\end{align*}

Following common convention, if $\M$ is a simplicial model for either knowledge or belief, the claim $x\in\M$ is short hand for $x\in N$ where $N$ is the set of nodes in $\M$.

Since belief complexes and how we draw them are relatively novel, we will look at a few examples. Consider Figure \ref{3fgr:SimpBelEx1}. In this example, there are three agents, $a$, $b$, and $c$. Hence, facets will be of size $3$ and are drawn as triangles. There are six nodes, $a_i$, $b_i$, and $c_i$ for $i\in 2$, and in each $S_i$ there are three facets, $\{a_1,b_1,c_1\}$, $\{a_1,b_1,c_0\}$, and $\{a_0,b_0,c_0\}$. There are three atomic propositions, $P_i$ for $i\in 2$, and $V(i_j)(P_k)=j$ if $k=i$ and $V(i_j)(P_k)=2$ otherwise.\footnote{It follows that this example can be given in the language of ``local variables''. \cite{KaSC} In fact, many of the examples and definitions in this paper are very easily lifted to that setting, as we are simply working in a slightly more general setting. See \cite{SimpBelRev} for more.}

\begin{figure}
	$$\begin{tikzcd}
		& \color{blue}b_0\color{black}(\neg P_b) \arrow[ld, no head] \arrow[dd, no head] & \color{red}a_1\color{black}(P_a) \arrow[rd, no head] \arrow[dd, no head]  &                                     \\
		\color{green}c_0\color{black}(\neg P_c) \arrow[rd, no head] \arrow[rru, no head] &                                                        &                                                         & \color{green}c_1\color{black}(P_c) \arrow[ld, no head] \\
		& \color{red}a_0\color{black}(\neg P_a)                                          & \color{blue}b_1\color{black}(P_b) \arrow[llu, no head]                                             
	\end{tikzcd}$$\caption{A simplicial model for belief with three agents and three worlds. In this example, $S_a=S_b=S_c$, and so we draw the complex as black rather than with distinct colors.}\label{3fgr:SimpBelEx1}
\end{figure}

We should consider an example where the $S_i$ are distinct. Consider Figure \ref{3fgr:SimpBelEx2}. This example is drawn on the same set of nodes with the same valuation, but $S_a\neq S_b$ and $S_a\neq S_c$. Facets which are present in $S_a$ are drawn in red, those in $S_b$ are drawn in blue, and those in $S_c$ are drawn in green. If a facet is present in all three complexes, we draw it in black, as before. Hence, the two facets $\{a_1,b_1,c_1\}$ and $\{a_1,b_1,c_0\}$ are present in all three $S_i$. However, the facet $\{a_1,b_0,c_0\}$ is present only in $S_b$ and $S_c$.

\begin{figure}
	$$\begin{tikzcd}
		& \color{blue}b_0\color{black}(\neg P_b) \arrow[blue, ld, no head] \arrow[green, ld, no head, bend right] \arrow[blue, r, no head] \arrow[green, r, no head, bend left] & \color{red}a_1\color{black}(P_a) \arrow[rd, no head] \arrow[dd, no head]  &                                     \\
		\color{green}c_0\color{black}(\neg P_c) \arrow[rru, no head] &                                                        &                                                         & \color{green}c_1\color{black}(P_c) \arrow[ld, no head] \\
		& \color{red}a_0\color{black}(\neg P_a)                                          & \color{blue}b_1\color{black}(P_b) \arrow[llu, no head]                                             
	\end{tikzcd}$$\caption{A simplicial model for belief with three agents and three worlds. In this example, $S_a\neq S_b$, $S_a\neq S_c$, and $S_b=S_c$.}\label{3fgr:SimpBelEx2}
\end{figure}

\section{Simplicial Complexes and Protocols}\label{3sec:SimpProt}

There is, at this point, a fairly well established discipline of applying combinatorial topology in modeling distributed protocols. \cite{Death,DoA,KaSC,SimpDEL,HG,FA,DCTCT} For those unfamiliar, we will briefly outline the broad methodology as sketched in the textbook ``Distributed Computing through Combinatorial Topology''. \cite{DCTCT} A ``protocol'' is a description of a sequence of signals between multiple processes, or agents. A good example is the ``Alternating Message Passing Protocol'' on pages 30-31 of the textbook. In this protocol, there are two agents, call them $a$ and $b$. Each agent has a privately held bit value, and is aware of their own bit value. However, each agent is uncertain of the other agent's bit value. We will denote these bit values as $P_a$ and $P_b$, respectively. So, $a$ is certain of the truth value of $P_a$, but uncertain of $P_b$, and vice versa. 

One agent is assumed to be the first sender, and we will assume this is $b$. In the protocol, the agents will send back and forth. So, once $a$ receives a message, $a$ becomes the next sender. However, the catch is that there's always a chance a message fails to be received. In this case, all sending stops.

One of the core ideas underlying applying simplicial complexes to these protocols is that we can describe each step of the protocol as a series of perspectives, and hence, each step as a simplicial model.\footnote{The textbook never actually interprets epistemic formulae in its models, but appeals to the intuition that two faces sharing the same perspective means that perspective is uncertain between those faces at many points throughout.} Consider the initial setup of the alternating message passing protocol. There are four persecptives, two for each agent. Let $a_1$ be the perspective where $P_a$ is true, and $a_0$ the perspective where it is false. Similarly with $b_1$ and $b_0$. This initial configuration is drawn on page 30 of ``Distributed Computing through Combinatorial Topology'', and we replicate that in Figure \ref{3Figure1}.

\begin{figure}
	$$\begin{tikzcd}
		{\color{red}a_0} \arrow[r, no head]  & {\color{blue}b_0} \arrow[d, no head]          \\
		{\color{blue}b_1} \arrow[u, no head] & {\color{red}a_1} \arrow[l, no head]
	\end{tikzcd}$$\caption{The Initial Configuration for the ``Alternating Message Passing Protocol''.}\label{3Figure1}
\end{figure}

The main idea then is that perspectives will change and duplicate in each stage. Consider $a_0$. There are three possibilities for how the first step can go for this perspective. Either $P_b$ is received, $\neg P_b$ is received, or nothing is received. So, in the second step, $a_0$ has to split into three perspectives. A similar change of reasoning will show that $a_1$ also splits into three perspectives.

Now let's consider $b_0$. Really, only one thing can happen in the first stage, and that's that you send $\neg P_b$. Same for $b_1$. So, this leaves us with 3 copies of $a_0$, 3 copies of $a_1$, a single copy of $b_0$, and a single copy of $b_1$ in the second stage of the protocol. Again, this is drawn on page 30 of ``Distributed Computing through Combinatorial Topology'', and we replicate that drawing in Figure \ref{3Figure2}.

\begin{figure}
	$$
	\begin{tikzcd}
		&                                                                                                  & {\color{red}a_1}                                                           &                                                 \\
		& {\color{red}a_0} \arrow[r, no head]                          & {\color{blue}b_0} \arrow[d, no head] \arrow[r, no head] \arrow[u, no head] & {\color{red}a_0} \\
		{\color{red}a_1} \arrow[r, no head] & {\color{blue}b_1} \arrow[u, no head] \arrow[d, no head] & {\color{red}a_1} \arrow[l, no head]                                   &                                                 \\
		& {\color{red}a_0}                                                  &                                                                                                                &                                                
	\end{tikzcd}$$\caption{The second step for the ``Alternating Message Passing Protocol''.}\label{3Figure2}
\end{figure}

The same sort of reasoning, given the description of the signal pattern for the protocol, can give a very predictable way in which perspectives are duplicated from stage to stage of the protocol. The crucial idea is that we can describe a kind of morphism which maps protocols to their duplicates, or more generally, faces of the complex to faces composed of perspectives which are duplicates of those composing the initial face. This determines the ``pattern'' by which perspectives are duplicated. Ultimately, the textbook is interested in this sense of protocol. A protocol specifies how one complex transforms into another (how one stage proceeds to the next stage) by specifying a morphism which shows which faces map to which.

Let us spell this intuitive story out formally. The following is equivalent to Definition 8.4.1 in the textbook ``Distributed Computing through Combinatorial Topology'': \cite{DCTCT}

\begin{definition}[Simplicial Protocol]
	A \defin{Simplicial Protocol} is a triple $(\mathcal{I},\mathcal{P},\Xi)$ where $\mathcal{I}$ is a simplicial model for knowledge, called the \defin{``Input Complex''}, $\mathcal{P}$ is a simplicial model for knowledge called the \defin{``Protocol Complex''} (better thought of in our context as the output complex) and $\Xi:\mathcal{I}\rightarrow2^\mathcal{P}$ is called the \defin{``Execution Map''}, which takes faces in $\mathcal{I}$ to sets of faces in $\mathcal{P}$ such that if $\sigma\subseteq\tau$, $\Xi(\sigma)\subseteq\Xi(\tau)$, and $\Xi(\sigma\cap\sigma')=\Xi(\sigma)\cap\Xi(\sigma')$. Moreover, for any face $\sigma$ in $\mathcal{I}$, $\{a\in Ag~|~\exists x\in\sigma(V_\mathcal{I}(x)=a)\}=\{a\in Ag~|~\exists X\in\Xi(\sigma),x\in X(V_\mathcal{P}(x)=a)\}$. When this is true we say that $\Xi$ is \defin{chromatic}.\footnote{Intuitively, chromaticity simply means that if $x$ is a face whose colors are $C\subseteq Ag$, then $\Xi$ lifts $x$ to a set of faces each of whose colors are exactly $C$, in addition to all the subsets of those faces.} And, additionally, $\mathcal{P}=\bigcup_{\sigma\in\mathcal{I}}\Xi(\sigma)$.
\end{definition}

In general, $\Xi$ specifies which input perspectives go to which output perspectives. The restrictions on $\Xi$ bear further analysis. Consider in particular the intersection property: $\Xi(\sigma\cap\sigma')=\Xi(\sigma)\cap\Xi(\sigma')$. The textbook justifies it by saying no agent ``can ``tell'' whether the execution started with inputs from $\sigma$ or from $\sigma'$''. \cite{DCTCT} Here, ``tell'' refers to the usual understanding of knowledge in simplicial complexes, where the accessibility of two facets is given by their nonempty intersection.\footnote{This is one place where the textbook gestures at intuitions afforded by the simplicial semantics.} Suppose that a node $x$ is in the image of $\sigma$ and $\sigma'$. That is, $x\in\Xi(\sigma)\cap\Xi(\sigma')$. Then, it follows from our intersection rule that $x\in\Xi(\sigma\cap\sigma')$. By our intersection semantics in Section \ref{3sec:simpsem}, these are precisely the nodes which are in the image of those that cannot distinguish between $\sigma$ and $\sigma'$. So, nodes which cannot tell whether they are in the image of $\sigma$ or $\sigma'$ are precisely those whose preimage cannot tell whether they are in $\sigma$ or $\sigma'$.

We will quickly show how this definition applies to figure \ref{3Figure1} and figure \ref{3Figure2}. First, $\mathcal{I}$ is the complex in figure \ref{3Figure1}, and $\mathcal{P}$ is the complex in \ref{3Figure2}. The only faces in $\mathcal{I}$ are singletons and pairs, so for any singleton face of the form $\{a_i\}$ or $\{b_i\}$, we say that $\Xi$ maps $\{n_i\}$ to the set of all $\{n_i\}$ in $\mathcal{P}$. For any pair of the form $\{a_i,b_j\}$, we say that $\Xi$ maps $\{a_i,b_j\}$ to the set of all $\{a_i,b_j\}$ in $\mathcal{P}$, plus all of the singletons $\{a_i\}$ and $\{b_j\}$. It is easy to check that this satisfies all of the needed properties.

\section{Action Models and Protocols} \label{3sec:ActProt}

In this section, we will first briefly outline action models in their usual setting of Kripke or frame semantics for those unfamiliar with them. We will make use of definitions for action models over Kripke frames akin to those in \cite{action}. Then, we will discuss the existing work connecting action models to distributed protocols, before developing our own definition of actions similar to \cite{KaSC} and \cite{SimpDEL} in Section \ref{3sec:SimpAct}, though modified for our new particular simplicial semantics.

Consider the following informal description of a signal:

``$a$ sends a message to $b$ whose content is $P_a$. However, $c$ is unaware that any message was sent at all, and furthermore, there is a chance the message will not be received, in which case $b$ will think no message was sent.''\label{3Infsig}

What we would like to do is find a way of representing this signal such that, given an input model $\mathcal{M}$, define the model $\mathcal{M}'$ which represents this input after the above signal is sent. As it turns out, to model this signal as an action model, we need to represent it using something very similar to a Kripke frame. For those unfamiliar with the basics of Kripke semantics, we recommend \cite{DC5} or \cite{Bjorndahl2024}. For us, a \defin{Kripke Frame} is a tuple $\langle W,\{\mathcal{R}_a\}_{a\in Ag}\rangle$, and a \defin{Kripke model} is a tuple $\langle W,\{\mathcal{R}_a\}_{a\in Ag},\ell\rangle$ which augments a frame with a valuation function $\ell:\mathfrak{P}\rightarrow 2^W$. Since we want to build something akin to a Kripke frame, the first step is to describe the worlds which capture the above signal. As it stands, there are three. The first world, which we shall call $k$, is the world where a message is sent, but not received. The second world, which we shall call $m$, is the world where a message is sent, but is received. Naturally, $a$ cannot tell these worlds apart. However, $b$ can. At $m$, $b$ considers $m$ possible. However, at $k$, $b$ should not consider $k$ possible. After all, if the message is not received, $b$ thinks that no message was sent. This tells us that we need a third world, $l$, which is the world where nothing happens. At world $k$, $b$ falsely believes the world is $l$. Similarly, at both $k$ and $m$, $c$ falsely believes the world is $l$. The only curious touch is that, at world $l$, $a$ should think the world is $l$. After all, this is the world where no message was sent - if $a$ doesn't send a message at all, they think they are in world $l$. This description gives us the Kripke frame in Figure \ref{3fgr:KripFrameAct}

\begin{figure}
	$$
	\begin{tikzcd}
		& {l} \arrow[loop, distance=2em, in=125, out=55] &  \\
		{k} \arrow[green, ru, bend right] \arrow[blue, ru, bend left] \arrow[red, rr, leftrightarrow] \arrow[red, loop, distance=2em, in=305, out=235] &                                                      & {m} \arrow[green, lu, bend left] \arrow[red, loop, distance=2em, in=305, out=235] \arrow[blue, loop, distance=2em, in=125, out=55]
	\end{tikzcd}
	$$\caption{The Kripke Frame corresponding to the signal \ref{3Infsig}.}\label{3fgr:KripFrameAct}
\end{figure}

As before, red corresponds to $\mathcal{R}_a$, blue to $\mathcal{R}_b$, and green to $\mathcal{R}_c$. Black edges are those edges present in all three relations. However, the above frame is insufficient to describe the action. After all, these worlds are not compatible with every possible input. Imagine an input world $w$ where $\neg P_a$ was true. Presumably, since $P_a$ is a fact that $a$ is aware of, i.e., a secret pertaining to $a$ or a bit value which $a$ is aware of, it is not possible that $a$ sends out $P_a$ at $w$ (we are assuming that $a$ is not lying and the signal always goes through correctly if it goes through). Hence, $w$ is incompatible with $k$ and $m$. Actions use the notion of ``preconditions'' to bear this out. Basically, in order for world $k$ to obtain, it must be the case, and hence, it is a precondition, that $P_a$ is true. We will mark preconditions with round brackets, like we do for truth conditions in typical models, and let context differentiate. However, preconditions, at least for actions over Kripke frames, are distinct from truth conditions, as a world can have a precondition which is not an atomic formula. Writing in the preconditions gives us the model depicted in Figure \ref{3fgr:KripModAct}

\begin{figure}
	$$
	\begin{tikzcd}
		& {l(\top)} \arrow[loop, distance=2em, in=125, out=55] &                                                                                                                       \\
		{k(P_a)} \arrow[green, ru, bend right] \arrow[blue, ru, bend left] \arrow[red, rr, leftrightarrow] \arrow[red, loop, distance=2em, in=305, out=235] &                                                      & {m(P_a)} \arrow[green, lu, bend left] \arrow[red, loop, distance=2em, in=305, out=235] \arrow[blue, loop, distance=2em, in=125, out=55]
	\end{tikzcd}
	$$\caption{The Kripke Model corresponding to the signal \ref{3Infsig}.}\label{3fgr:KripModAct}
\end{figure}

$l$ is compatible with any input, however, $k$ and $m$ require that $P_a$ to be true in order for them to obtain. Nicely, this is precisely the action given on page 22 of \cite{sus}, just with the precondition $K_a\varphi$ swapped out for $P_a$: when we turn to the simplicial setting in Section \ref{3sec:SimpAct}, these preconditions will be equivalent so long as we only ever assign $P_a$ to $a$-colored perspectives in the input, and hence, $P_a$ will be true if and only if $B_aP_a$ is true.

\begin{figure}
	$$
	\begin{tikzcd}
		{w_1(\neg P_a,\neg P_b,\neg P_c)} \arrow[green, r, leftrightarrow] \arrow[loop, distance=2em, in=305, out=235] & {w_2(P_a,P_b,\neg P_c)} \arrow[red, r, leftrightarrow, bend left] \arrow[blue, r, leftrightarrow, bend right] \arrow[loop, distance=2em, in=305, out=235] & {w_3(P_a,P_b,P_c)} \arrow[loop, distance=2em, in=305, out=235]
	\end{tikzcd}
	$$\caption{The Kripke Model we interpret as the input, or starting configuration.}\label{3fgr:KripModInput}
\end{figure}

Now that we understand what actions look like, let's imagine what updating an input with an action model might yield. Consider the input model as drawn in Figure \ref{3fgr:KripModInput}. This input is easy enough to interpret - it is the Kripke equivalent of Figure \ref{3fgr:SimpBelEx1}, using the translation given in \cite{SimpBelRev}. Following our heuristic description of how to update some input with an action, we know that worlds in the update are ordered pairs, whose first term is from the input, and whose second term is from the action. However, not all ordered pairs are allowed - if at a world $w$, the precondition for the action world $n$ is false, then the pair $(w,n)$ cannot be allowed in the action update. Worlds in the update model are typically given as ordered pairs, with a world from the input in the first place and a world from the output in the second place. The allowable pairs are those where the first world satisfies the preconditions of the second world. For our example, using figures \ref{3fgr:KripModAct} and \ref{3fgr:KripModInput}, this gives us the collection of worlds drawn in Figure \ref{3fgr:WorldOutput}.

\begin{figure}
	$$
	\begin{tikzcd}
		& {(w_2,m)} & {(w_3,m)} \\
		& {(w_2,k)} & {(w_3,k)} \\
		{(w_1,l)} & {(w_2,l)} & {(w_3,l)}
	\end{tikzcd}
	$$\caption{The worlds in the output where Figure \ref{3fgr:KripModInput} is the input and Figure \ref{3fgr:KripModAct} is the action.}\label{3fgr:WorldOutput}
\end{figure}

What should the truth conditions at these worlds be? Well, if $(w_2,k)$ is the world $w_2$, but where the message is sent but not received, then this should take on the same truth values as $w_2$. Indeed, actions don't change what's true at a world. Rather, they create copies of worlds which can be differently interpreted. This gives us the model as drawn in Figure \ref{3fgr:WorldValOutput}.

\begin{figure}
	$$
	\begin{tikzcd}
		& {(w_2,m)(P_a,P_b,\neg P_c)} & {(w_3,m)(P_a,P_b,P_c)} \\
		& {(w_2,k)(P_a,P_b,\neg P_c)} & {(w_3,k)(P_a,P_b,P_c)} \\
		{(w_1,l)(\neg P_a,\neg P_b,\neg P_c)} & {(w_2,l)(P_a,P_b,\neg P_c)} & {(w_3,l)(P_a,P_b,P_c)}
	\end{tikzcd}
	$$\caption{The worlds with truth values in the output where Figure \ref{3fgr:KripModInput} is the input and Figure \ref{3fgr:KripModAct} is the action.}\label{3fgr:WorldValOutput}
\end{figure}

All that is left to determine is each $\mathcal{R}_i$ for $i\in Ag$. When should $(w,n)\mathcal{R}_i(w',n')$? Well, $(w,n)$ is world $w$ where $n$ has occurred. So, if $w$ and $w'$ are differentiable, so should $(w,n)$ and $(w',n')$. The action does not change what is true at worlds, and so does not make worlds which were differentiable suddenly undifferentiable. However, if $n$ and $n'$ are differentiable, then $(w,n)$ and $(w',n')$ should be differentiable, even if $w$ and $w'$ are undifferentiable. After all, this is the main point - these were the same world, or at least undifferetiable worlds, but something different has occurred in each. This gives us a very natural definition: $(w,n)\mathcal{R}_i(w',n')$ if and only if $w\mathcal{R}_iw'$ and $n\mathcal{R}_in'$. This gives us the output model as drawn in Figure \ref{3fgr:KripModOutput}.

\begin{figure}
	$$
	\begin{tikzcd}
		& {(w_2,m)(P_a,P_b,\neg P_c)} \arrow[red, r, leftrightarrow, bend left] \arrow[blue, r, leftrightarrow, bend right] \arrow[green, dd, bend right] \arrow[blue, loop, distance=2em, in=125, out=55] \arrow[red, loop, distance=2em, in=215, out=145] \arrow[red, d, leftrightarrow] & {(w_3,m)(P_a,P_b,P_c)} \arrow[green, dd, bend left] \arrow[blue, loop, distance=2em, in=125, out=55] \arrow[red, loop, distance=2em, in=35, out=325] \arrow[red, d, leftrightarrow] \\
		& {(w_2,k)(P_a,P_b,\neg P_c)} \arrow[red, r, leftrightarrow, bend left] \arrow[green, d, bend right] \arrow[blue, d, bend left] \arrow[red, loop, distance=2em, in=215, out=145]                                           & {(w_3,k)(P_a,P_b,P_c)} \arrow[green, d, bend right] \arrow[blue, d, bend left] \arrow[red, loop, distance=2em, in=35, out=325]                                          \\
		{(w_1,l)(\neg P_a,\neg P_b,\neg P_c)} \arrow[green, r, leftrightarrow] \arrow[loop, distance=2em, in=305, out=235] & {(w_2,l)(P_a,P_b,\neg P_c)} \arrow[red, r, leftrightarrow, bend left] \arrow[blue, r, leftrightarrow, bend right] \arrow[loop, distance=2em, in=305, out=235]                         & {(w_3,l)(P_a,P_b,P_c)} \arrow[loop, distance=2em, in=305, out=235]
	\end{tikzcd}
	$$\caption{The Kripke model output where  Figure \ref{3fgr:KripModInput} is the input and Figure \ref{3fgr:KripModAct} is the action.}\label{3fgr:KripModOutput}
\end{figure}

Using our heuristic understanding, it's not difficult to imagine what each world in the above model means. At any world $(w_j,n)$, $c$ should only consider possible those worlds $(w_h,l)$ such that $w_j\mathcal{R}_cw_h$, because $c$ always falsely believes no signal was sent. Similarly, at world $(w_2,k)$, all four of $(w_2,k)$, $(w_2,m)$, $(w_3,k)$, and $(w_3,m)$ are indistinguishable for $a$, as $a$ cannot tell apart $w_2$ and $w_3$, nor can $a$ tell apart $k$ from $m$. In general, one can think of an action update as creating (restricted) ``copies'' of the input model, and these ``copies'' are indistinguishable in terms of the relations in the action. Hence, the $l$ ``copy'' of the input model contains the only worlds which $c$ considers possible, etc.

This gives us the following formal definition of an action, equivalent to the one in \cite{action}:

\begin{definition}\label{3DefActKripke}
	
	Formally, an action model $A$ is a tuple $\langle W_A,\{\mathcal{R}_{A,a}\}_{a\in Ag},pre\rangle$\footnote{This notation is not standard in the literature, though it parallels the notation I will use later when developing simplicial actions.} such that $pre$ is a function from worlds $w\in W$ to sets of formulas.\footnote{$W_A$ is a set of worlds, and each $\mathcal{R}_{A,a}$ is a binary relation on those worlds.} Given a Kripke model $\mathcal{M}:=\langle W,\{\mathcal{R}_a\}_{a\in Ag},\ell\rangle$, and an action model $A$, we define the action update model $\mathcal{M}[A]:=\langle W[A],\{\mathcal{R}_a[A]\}_{a\in Ag},\ell[A]\rangle$ as follows:\footnote{$W[A]$ is a subset of $W\times W_A$, Each $\mathcal{R}_a[A]$ is a binary relation on $W[A]$, and $\ell[A]:\mathfrak{P}\rightarrow 2^{W[A]}$, just as in any Kripke model.}
	\begin{align*}
		W[A]&:=\{(x,y)\in W\times W_A|\mathcal{M},x\vDash pre(y)\}
		\\ \mathcal{R}_a[A](x,y)&:=\{(x',y')\in W[A]|x'\in \mathcal{R}_a(x)\wedge y'\in \mathcal{R}_{A,a}(y)\}
		\\ \ell[A](P)&:=\{(x,y)\in W[A]|x\in \ell(P)\}
	\end{align*}
	
\end{definition}

We conclude this section by outlining the existing applications of action models to distributed protocols. In general, the procedures for using action models to describe protocols are understood but new, showing up in \cite{SimpDEL}, \cite{KaSC}, \cite{KaG}, \cite{ActionDC}, \cite{CPM}, and \cite{CPaAM}. In Section \ref{3sec:Ex}, we will translate two of the examples of protocols in \cite{DCTCT} into action models, broadly following the trends of these papers. The key idea is that we can think of each ``step'' in the protocol as the application of an action. This is why we used the intuition of a ``signal'' when giving our account of actions. Not all actions can or should be interpreted as signals between agents (they are very general objects), but those actions that can be interpreted as signals can be interpreted as steps in a protocol. Suppose we are in a setting where agents have a privately held bit value. If the first step of a protocol in this setting is that all agents broadcast their bit value, with a chance that this broadcast is not received by some or all of the other agents, we can model this step as an action in the following sense: if $\M$ represents the protocol before this step, and $A$ is the action corresponding with this signal, $\M[A]$ is the protocol after this step. Hence, $\M[A]$ will be the input for the next step, which itself will be represented by a possibly distinct action, and so on.

Let's again consider the ``Alternating Message Passing Model''. \cite{DCTCT} All Kripke models we will consider in this example are equivalence relations. We again consider the case with two agents, call them $a$ and $b$. Each agent has a privately held bit value, which we will denote using the proposition $P_a$ and $P_b$, respectively. Hence, in the input, there are four worlds, corresponding with all possible configurations of the bit values. This is drawn in Figure \ref{3fgr:KAMPworlds}. It is easy to determine which of these worlds the agents cannot tell apart. Agent $a$ is uncertain about the truth value of $P_b$, but certain about the truth value $P_a$, and vice versa for agent $b$. So, agent $a$ cannot tell apart $w_0$ and $w_2$, and cannot tell apart $w_1$ and $w_3$. Hence, $w_0\mathcal{R}_aw_2$ and $w_1\mathcal{R}_aw_3$. We will denote $\mathcal{R}_a$ with red edges and $\mathcal{R}_b$ with blue edges. This is drawn in Figure \ref{3fgr:KAMPinput}. Call this model $I$.

\begin{figure}
	$$
	\begin{tikzcd}
		{w_0(P_a,P_b)}      & {w_1(\neg P_a,P_b)}      \\
		{w_2(P_a,\neg P_b)} & {w_3(\neg P_a,\neg P_b)}
	\end{tikzcd}
	$$\caption{The four worlds in the input for the ``Alternating Message Passing Protocol''.}\label{3fgr:KAMPworlds}
\end{figure}

\begin{figure}
	$$
	\begin{tikzcd}
		{w_0(P_a,P_b)} \arrow[blue, r, leftrightarrow]      & {w_1(\neg P_a,P_b)} \arrow[red, d, leftrightarrow]      \\
		{w_2(P_a,\neg P_b)} \arrow[red, u, leftrightarrow] & {w_3(\neg P_a,\neg P_b)} \arrow[blue, l, leftrightarrow]
	\end{tikzcd}
	$$\caption{The first stage of the ``Alternating Message Passing Protocol'' with two agents.}\label{3fgr:KAMPinput}
\end{figure}

Recall how the ``Alternating Message Passing Protocol'' proceeds. In this protocol, one agent is randomly selected as the first sender. For our purposes, this will be agent $a$. In the first step, agent $a$ sends her bit value to agent $b$. However, there is a chance that this message is not received. This gives us four worlds. There are two worlds where $P_a$ is sent, and two worlds where $\neg P_a$ is sent. In one of the worlds where $P_a$ is sent, it is received by agent $b$, and in another, it is not. Similarly for the worlds where $\neg P_a$ is sent. Let $1_r$ denote the world where $P_a$ is sent and received, $1_{\sim r}$ denote the world where $P_a$ is sent and not received, $0_{r}$ the world where $\neg P_a$ is sent and received, and $0_{\sim r}$ the world where $\neg P_a$ is sent but not received. Hence, $1_r\mathcal{R}_a1_{\sim r}$ and  $0_r\mathcal{R}_a0_{\sim r}$. If $b$ is in $1_r$ they know they have received the message $P_a$, and hence only consider $1_r$ possible. Similarly for $0_r$. However, $b$ cannot tell apart $1_{\sim r}$ and $0_{\sim r}$, as these are the worlds where no message is received. Note that $a$ will only send their actual bit value. Hence, $a$ will only send $P_a$ if $P_a$ is true. So, the precondition for $1_r$ and $1_{\sim r}$ is $P_a$, and the precondition for $0_r$ and $0_{\sim r}$ is $\neg P_a$. We draw this in Figure \ref{3fgr:KAMPaction}. Call this action $A_a$. Hence, we can compute $I[A_a]$. This is drawn in Figure \ref{3fgr:KAMPoutput}. It's easy to check that $I[A_a]$ corresponds with the ``One Step'' drawing on page 30 in \cite{DCTCT}, using a translation between simplicial models and Kripke models similar to those given in \cite{KaSC}.\footnote{Despite being called a ``graph'', the drawings in Section 2 of \cite{DCTCT} should be thought of as simplicial models for knowledge with two agents.} 

After $b$ receives a message from $a$, she sends a response which consists of her own bit value. As before, there is a chance this message is not received. This action is the exact same as $A_a$, just with the agents swapped. We call this action $A_b$ and draw it in Figure \ref{3fgr:KAMPactionb}. Unfortunately, $(I[I_a])[I_b]$ does not correspond to the ``Two Steps'' drawing in \cite{DCTCT}. The reason is because even worlds where $a$'s message failed to be received may become worlds where $b$ still sends a message as we have defined things. In principle, however, accounting for this would be easy. One simply must define ``postconditions'' where what is true at a world is ``overwritten'', as is done for the action models in \cite{KaSC}. This would let us apply a tag to worlds like $(w_0,1_{\sim r})$ which reads ``not received'', and we can take as a precondition for all worlds in $I_b$ that this tag be false. However, rather than explore this more carefully here, we will leave a formal exploration of postconditions for Section \ref{3sec:SimpAct}. In Section \ref{3sec:Ex}, we will use postconditions to fully capture the Alternating Message Passing protocol.

\begin{figure}
	$$
	\begin{tikzcd}
		1_r(P_a) \arrow[red, r, leftrightarrow] & 1_{\sim r}(P_a) \arrow[blue, r, leftrightarrow] & 0_{\sim r}(\neg P_a) \arrow[red, r, leftrightarrow] & 0_r(\neg P_a)
	\end{tikzcd}
	$$\caption{The action representing the first step of the ``Alternating Message Passing Protocol'': Agent $a$ sends her true bit value to agent $b$ with a chance the message fails to be received.}\label{3fgr:KAMPaction}
\end{figure}

\begin{figure}
	$$
	\begin{tikzcd}
		{(w_0,1_r)} \arrow[red, d, leftrightarrow] \arrow[red, rd, leftrightarrow] & {(w_0,1_{\sim r})} \arrow[blue, r, leftrightarrow] \arrow[red, l, leftrightarrow] \arrow[red, ld, leftrightarrow] & {(w_1,0_{\sim r})} \arrow[red, d, leftrightarrow] \arrow[red, rd, leftrightarrow] & {(w_1,0_r)} \arrow[red, l, leftrightarrow] \arrow[red, ld, leftrightarrow] \\
		{(w_2,1_r)} \arrow[red, r, leftrightarrow]            & {(w_2,1_{\sim r})} \arrow[red, u, leftrightarrow]                      & {(w_3,0_{\sim r})} \arrow[blue, l, leftrightarrow] \arrow[red, r, leftrightarrow]  & {(w_3,0_r)} \arrow[red, u, leftrightarrow]           
	\end{tikzcd}
	$$
	\caption{The output of applying the action shown in Figure \ref{3fgr:KAMPaction} to the input shown in Figure \ref{3fgr:KAMPinput}.} \label{3fgr:KAMPoutput}
\end{figure}

\begin{figure}
	$$
	\begin{tikzcd}
		1_r(P_b) \arrow[blue, r, leftrightarrow] & 1_{\sim r}(P_b) \arrow[red, d, leftrightarrow] \\
		0_r(\neg P_b)           & 0_{\sim r}(\neg P_b) \arrow[blue, l, leftrightarrow]
	\end{tikzcd}
	$$
	\caption{The action representing the first step of the signal describes as follows: agent $b$ sends her true bit value to agent $a$ with a chance the message fails to be received.} \label{3fgr:KAMPactionb}
\end{figure}

Our next goal will be to apply the notion of action models to simplicial semantics, similar to \cite{Death} and \cite{KaSC}, though slightly modified to apply to our particular simplicial semantics. After this, we will modify this definition of action to account for the notion of belief revision given in \cite{SimpBelRev}. Modifying action models where agents learn via belief revision is also an explored area, with work such as \cite{BRDC1} and \cite{BRDC2}. However, none of these models will use the specific model of revision we will motivate in Section \ref{3sec:SimpAct}. That is because this work on using revision in protocols does not use the formal apparati for describing protocols in simplicial complexes as seen in \cite{DCTCT}, and thus cannot take advantage of the model of revision presented in \cite{SimpBelRev}.

\section{Simplicial Actions} \label{3sec:SimpAct}

Before we define action models in the simplicial setting, we need to give three important definitions. First, $\pi_0$ and $\pi_1$ are the projections onto the first and second coordinates of an ordered pair, respectively. That is, for any ordered pair $(x,y)$, $\pi_0((x,y))=x$ and $\pi_1((x,y))=y$. For the third definition, let $N$ be a set of nodes, $V$ a coloring function, and $L$ an assignment. Suppose $F$ is a face in $\mathfrak{M}(N,V,L)$. We say that $F$ is \defin{consistent with respect to $L$} when for all $x,y\in F$, there is no $P\in\mathfrak{P}$ such that $L(x)(P)=1$ and $L(y)(P)=0$.

We can now give the definitions for simplicial actions formally. In general, we will use definitions for actions similar to those in \cite{Death}, \cite{KaSC}, and \cite{SimpDEL}. In particular, as in \cite{KaSC}, actions will have both pre and post conditions. Importantly, we will assume that both preconditions and postconditions are atomic formulae in this paper. It is essential that they be non-modal formulae, unlike in the general case of action models, so that we can interpret revision, and restricting to the atomic case is sufficient for our examples. Expanding preconditions and postconditions beyond atomic formulae is an area for future research. In drawings, we will denote preconditions with round brackets, and postconditions with square brackets. 

\begin{definition}[Simplicial Action]
	A \defin{Simplicial Action} $A$ is a tuple $\langle N_A,V_A,L_A,S_A,Post\rangle$, where $Post:N_a\rightarrow 3^\mathfrak{P}$.\footnote{$N_A$ is a set of nodes, $V_A:N_A\rightarrow Ag$, $L_A:N_A\rightarrow 3^\mathfrak{P}$, and $S_A$ is a UCF subcomplex of $\mathfrak{M}(N_A,V_A,L_A)$, just as in a regular simplicial model.}
	
	Given a simplicial model for knowledge $\mathcal{M}=\langle N,V,L,S\rangle$, and a simplicial action $A=\langle N_A,V_A,L_A,S_A,Post\rangle$, we define the update model $\mathcal{M}[A]:=\langle N[A],V[A],L[A],S[A]\rangle$\footnote{$N[A]$ is a subset of $N\times N_A$, $V[A]:N[A]\rightarrow Ag$, $L[A]:N[A]\rightarrow 3^\mathfrak{P}$, and $S[A]$ is a UCF subcomplex of $\mathfrak{M}(N[A],V[A],L[A])$, just as in a regular simplicial model.} as follows:
	
	\begin{align*}
		N[A]&:=\{(x,y)\in N\times N_A|V(x)=V_A(y)\\
		&\wedge\forall P\in\mathfrak{P}((L(x)(P)=1\rightarrow L_A(y)(P)\neq 0)\wedge(L(x)(P)=0\rightarrow L_A(y)(P)\neq 1))\}
		\\ V[A]((x,y))&:=V(x)
		\\ L[A]((x,y))(P)&:=\begin{cases}
			& Post(y)(P)\;\text{if}\;Post(y)(P)\neq 2 \\
			& L_A(y)(P)\;\text{if}\;L(x)(P)=2 \\
			&L(x)\;\text{otherwise}
		\end{cases} 
		\\ \mathcal{F}(S[A])&:=\left \{F\in 2^{N[A]}~\middle |~\begin{matrix}
			& \pi_0(F)\in\mathcal{F}(S) \\
			& \wedge\pi_1(F)\in\mathcal{F}(S_A) \\
			& \wedge F\text{ is consistent w.r.t. }L[A] \\
		\end{matrix} \right \}
	\end{align*}
\end{definition}

Note that if $F\in 2^{N[A]}$, $\pi_0(F)=\{x~|~\exists y\in S_A((x,y))\in F)\}$, and similarly for $\pi_1$. Let us show that for any simplicial model for knowledge $\M$ and simplicial action $A$, $\M[A]$ satisfies the UCF property.

\begin{theorem}\label{3thm:simpUCF} Let $\M$ be a simplicial model for knowledge and $A$ a simplicial action model. Then the output model $\mathcal{M}[A]$ also satisfies the UCF property.
	\begin{proof}
		Section \ref{3prf:simpUCF}
	\end{proof}
\end{theorem}

Our definition of $N[A]$ is not the only one which could be motivated. In fact, all three of the definitions below are defensible.

\begin{enumerate}
	\item $\forall P\in\mathfrak{P}((L(x)(P)=1\rightarrow L_A(y)(P)=1)\wedge(L(x)(P)=0\rightarrow L_A(y)(P)=0))$
	\item $\forall P\in\mathfrak{P}((L_A(y)(P)=1\rightarrow L(x)(P)=1)\wedge(L_A(y)(P)=0\rightarrow L_A(x)(P)=0))$
	\item $\forall P\in\mathfrak{P}((L(x)(P)=1\rightarrow L_A(y)(P)\neq 0)\wedge(L(x)(P)=0\rightarrow L_A(y)(P)\neq 1))$
\end{enumerate}

We will refer to these as modes 1-3. Indeed, mode 3 is the one introduced in our definition above. For our purposes in this paper, it will suffice.

We now show the sense in which action models are protocols. Let $I$ be a simplicial model for knowledge and $A$ an action model. Intuitively, our protocol complex should be given by $I[A]$. What we need to define is an execution map $\Xi:I\rightarrow 2^{I[A]}$. As it turns out, the correct map is given by 

$$\Xi(F):=\{X\in I[A]~|~\pi_0(X)\subseteq F\}$$

Where $X\in I[A]$ is shorthand for ``$X$ is a face in the simplicial complex of $I[A]$. This is analogous to the $\pi_\mathcal{I}$ function from \cite{SimpDEL}. This gives us the following:

\begin{theorem}\label{3thm:simpprot} Let $I$ be a simplicial model for knowledge and $A$ an action model. The tuple $(I,I[A],\Xi)$ is a Simplicial Protocol.
	\begin{proof}
		Section \ref{3prf:simpprot}
	\end{proof}
\end{theorem}

\subsection{Examples} \label{3sec:Ex}

As we have already mentioned, there is existing work which establishes the connection between action models and protocols. \cite{SimpDEL,KaSC,KaG,ActionDC,CPM,CPaAM} In particular, \cite{KaSC} and \cite{SimpDEL} even give some examples. We will do much the same, using examples from the text ``Distributed Computing through Combinatorial Topology'' as our baseline. \cite{DCTCT} These examples are nice because they are already given in the framework of simplicial complexes. All we need to do is prove that it is possible to represent these existing protocols using action models.

We can now fully capture the Alternating Message Passing Protocol on page 30 of \cite{DCTCT}. In this protocol, there are two agents, call them $a$ and $b$. As usual, these will be the red and blue agents respectively. The initial arrangement is that each agent has a personal bit value, and they will be sharing that bit value with each other. As above, we will call these bit values $P_a$ for $a$'s bit value and $P_b$ for $b$'s bit value. One agent is designated the first sender, and we shall pick $b$. The input complex will have four perspectives, two for each agent corresponding to the possible bit values, and total uncertainty. Since $b$ is the first sender, the two $b$ perspectives will be marked with the literals $S$ and $\neg R$, and since $a$ is the first received, they will be marked with the literals $\neg S$ and $R$. This gives us the input as drawn in Figure \ref{3Figure26}. Note how this is the same as the first stage drawing in \cite{DCTCT}, which we replicated in Figure \ref{3Figure1}.

\begin{figure}
	$$\begin{tikzcd}
		{\color{red}a_1\color{black}(P_a,\neg S,R)} \arrow[r, no head]  & {\color{blue}b_1\color{black}(P_b,S,\neg R)} \arrow[d, no head]          \\
		{\color{blue}b_0\color{black}(\neg P_b,S,\neg R)} \arrow[u, no head] & {\color{red}a_0\color{black}(\neg P_a,\neg S,R)} \arrow[l, no head]
	\end{tikzcd}$$\caption{The Initial Configuration for the ``Alternating Message Passing Protocol''.}\label{3Figure26}
\end{figure}

Now we must describe the protocol. The main idea of the alternating message protocol is that agents are sending messages back and forth, and there is always a chance a message does not go through. If an agent doesn't receive a message, they don't send. Let's imagine what happens when $b$ sends the first message. There are two $b$ perspectives, corresponding to where $b$ sends $P_b$ and where $b$ sends $\neg P_b$. These should have preconditions $S$, $\neg R$, and $P_b$ or $\neg P_b$ as appropriate. We will call these $1$ and $2$. By contrast, there are three $a$ perspectives. These correspond with $a$ learning $P_b$, $a$ learning $\neg P_b$, and $a$ learning nothing. We will call these $3$, $5$, and $4$ respectively. The preconditions on $3$ and $5$ are $\neg S$ and $R$, while the precondition on $4$ is merely $\neg S$. This last point will be important for repeated applications of the protocol - any $a$ perspective which is not a sender can receive nothing, possibly after having received nothing for many turns. This partially defined action model is drawn in Figure \ref{3Figure27}.

\begin{figure}
	$$
	\begin{tikzcd}
		{\color{blue}1\color{black}(P_b,S,\neg R)} \arrow[d, no head] &                                                                                                                         & {\color{blue}2\color{black}(\neg P_b,S,\neg R)} \arrow[d, no head] \\
		{\color{red}3\color{black}(\neg S,R)}                         & {\color{red}4\color{black}(\neg S)} \arrow[lu, no head] \arrow[ru, no head] & {\color{red}5\color{black}(\neg S,R)}
	\end{tikzcd}$$\caption{The fragment of the action for the ``Alternating Message Passing Protocol'' corresponding with $b$ sharing their bit value, sans postconditions.}\label{3Figure27}
\end{figure}

We now must draw the postconditions. Consider perspective $1$. Having just sent, they turn into a receiver. So the postconditions here should be $\neg S$ and $R$. Same for $2$. Similarly, receivers become senders, so the postconditions on $3$ and $5$ should be $S$ and $\neg R$. Lastly, the postcondition on $4$ should be $\neg R$. If $a$ receives nothing, they continue to not be a sender, but now are also not a receiver. This action model is drawn in Figure \ref{3Figure28}.

\begin{figure}
	$$
	\begin{tikzcd}
		{\color{blue}1\color{black}(P_b,S,\neg R)[\neg S,R]} \arrow[d, no head] &                                                                                                                         & {\color{blue}2\color{black}(\neg P_b,S,\neg R)[\neg S,R]} \arrow[d, no head] \\
		{\color{red}3\color{black}(\neg S,R)[S,\neg R]}                         & {\color{red}4\color{black}(\neg S)[\neg R]} \arrow[lu, no head] \arrow[ru, no head] & {\color{red}5\color{black}(\neg S,R)[S,\neg R]}                            \end{tikzcd}$$\caption{The fragment of the action for the ``Alternating Message Passing Protocol'' corresponding with $b$ sharing her bit value.}\label{3Figure28}
\end{figure}

The last step in the protocol is to create the mirror image of this for when $a$ is the sender and attach it to the existing drawing. This is seen in Figure \ref{3Figure29}.

\begin{figure}
	$$
	\begin{tikzcd}
		{\color{blue}6\color{black}(\neg S,R)[S,\neg R]}                        & {\color{blue}7\color{black}(\neg S)[\neg R]} \arrow[ld, no head] \arrow[rd, no head]                   & {\color{blue}8\color{black}(\neg S,R)[S,\neg R]}                             \\
		{\color{red}9\color{black}(P_a,S,\neg R)[\neg S,R]} \arrow[u, no head]  &                                                                                                                         & {\color{red}10\color{black}(\neg P_a,S,\neg R)[\neg S,R]} \arrow[u, no head] 
	\end{tikzcd}$$\caption{The fragment of the action for the ``Alternating Message Passing Protocol'' corresponding with $a$ sharing her bit value.}\label{3Figure29}
\end{figure}

The last thing to acknowledge is that an edge must be drawn between $4$ and $7$ - faces between two edges where nothing is received continue to be faces after any step of the protocol, as these are the worlds where nothing happens. This gives us the complete action model, which is drawn in Figure \ref{3Figure30}.

\begin{figure}
	$$
	\begin{tikzcd}
		{\color{blue}1\color{black}(P_b,S,\neg R)[\neg S,R]} \arrow[d, no head] &                                                                                                                         & {\color{blue}2\color{black}(\neg P_b,S,\neg R)[\neg S,R]} \arrow[d, no head] \\
		{\color{red}3\color{black}(\neg S,R)[S,\neg R]}                         & {\color{red}4\color{black}(\neg S)[\neg R]} \arrow[d, no head] \arrow[lu, no head] \arrow[ru, no head] & {\color{red}5\color{black}(\neg S,R)[S,\neg R]}                              \\
		{\color{blue}6\color{black}(\neg S,R)[S,\neg R]}                        & {\color{blue}7\color{black}(\neg S)[\neg R]} \arrow[ld, no head] \arrow[rd, no head]                   & {\color{blue}8\color{black}(\neg S,R)[S,\neg R]}                             \\
		{\color{red}9\color{black}(P_a,S,\neg R)[\neg S,R]} \arrow[u, no head]  &                                                                                                                         & {\color{red}10\color{black}(\neg P_a,S,\neg R)[\neg S,R]} \arrow[u, no head] 
	\end{tikzcd}$$\caption{The action for the ``Alternating Message Passing Protocol''.}\label{3Figure30}
\end{figure}

Now let's compute the output of applying this action model to the specified input. The output vertices are $(a_1,4)$, $(a_1,3)$, $(a_1,5)$, $(b_1,1)$, $(b_0,2)$, $(a_0,4)$, $(a_0,3)$, $(a_0,5)$. This gives the output as drawn in Figure \ref{3Figure31}. Note how this is the same as the drawing of the second step for the protocol as given in \cite{DCTCT}, which we replicated in Figure \ref{3Figure2}. There are new perspectives created for when $a$ learns the truth value of $y$, but also the original $a$ perspectives are present, now corresponding to a failed signal. Let's apply the action again. The output vertices are $((a_1,4),4)$, $((a_0,4),4)$, $((b_1,1),7)$, $((b_1,1),6)$, $((b_1,1),8)$, $((b_0,2),7)$, $((b_0,2),6)$, $((b_0,2),8)$, $((a_0,3),10)$, $((a_1,3),9)$, $((a_0,5),10)$, $((a_1,5),9)$. This again gives the result seen on page 30 of \cite{DCTCT}.

\begin{figure}
	$$
	\begin{tikzcd}
		&                                                                                                  & {(\color{red}a_0,3\color{black})(\neg P_a,S,\neg R)}                                                           &                                                 \\
		& {(\color{red}a_1,4\color{black})(P_a,\neg S,\neg R)} \arrow[r, no head]                          & {(\color{blue}b_1,1\color{black})(P_b,\neg S,R)} \arrow[d, no head] \arrow[r, no head] \arrow[u, no head] & {(\color{red}a_1,3\color{black})(P_a,S,\neg R)} \\
		{(\color{red}a_0,5\color{black})(\neg P_a,S,\neg R)} \arrow[r, no head] & {(\color{blue}b_0,2\color{black})(\neg P_b,\neg S,R)} \arrow[u, no head] \arrow[d, no head] & {(\color{red}a_0,4\color{black})(\neg P_a,\neg S,\neg R)} \arrow[l, no head]                                   &                                                 \\
		& {(\color{red}a_1,5\color{black})(P_a,S,\neg R)}                                                  &                                                                                                                &                                                
	\end{tikzcd}$$\caption{The output of applying the action in Figure \ref{3Figure30} to the input in Figure \ref{3Figure26}. }\label{3Figure31}
\end{figure}

An important observation - when creating the action, we have to think through copies of perspectives. Even though creating a ``base'' copy of the $b_1$ world in the input (i.e., pairing it with $7$) would seem to make sense, it doesn't. We need to transform this perspective from a sender to a non-sender. This is the role $1$ and $2$ play in the action. The base ($4$ and $7$) duplicate everything up to the sender perspective. Then, $1$ duplicates the sender, while $3$ duplicates the receiver (already duplicated by $4$, hence an additional copy).

Let us prove by induction that this example works no matter how many times we apply the protocol. The above examples are our base case. Suppose Output(n) is a square of four nodes, with two strings off each blue node, alternating in color, of length n (so four strings total). Everything is $\neg S,\neg R$, except the last two nodes on each string, which are $\neg S, R$ just before the tip and $S,\neg R$ at the tip. We will show that applying the action increases the length of these strings by 1. Suppose, without loss of generality, that the tips of the strings are red. Then $4$ and $7$ copy everything up until the tip of each string. Each tip is copied by either $9$ or $10$, and the nodes just before the tip are additionally copied by $6$ and $8$. Because the tip of each node is either copied by $9$ or $10$, the additional copies from $6$ uniquely attach to their respective $9$-tipped string, and the $8$ copies their respective $10$-tipped string, as desired.

We will now turn towards our next example, the ``Layered Message Passing Protocol''.  Here, the two agents simultaneously write their bit value to a ledger at each time interval (so the protocol is synchronous). However, at each interval, there is a chance that at most one agent will fail. Upon observing a failure, the agents will stop. We will consider the restricted situation where $a$ has bit value $0$ and $b$ has bit value $1$ first. Each of course starts as a sender, giving us the input as drawn in Figure \ref{3Figure32}.

\begin{figure}
	$$\begin{tikzcd}
		\color{red}a\color{black}(\neg P_a,S) \arrow[r, no head] & \color{blue}b\color{black}(P_b,S)
	\end{tikzcd}$$\caption{Restricted input for the ``Layered Message Passing Protocol'' where $a$ has bit value $0$ and $b$ has bit value $1$.}\label{3Figure32}
\end{figure}

The protocol itself is trickier. The action has four fairly obvious perspectives, these being where $a$ receives the message, $a$ does not, and the same for $b$. But there are two other perspectives, one for each agent, which are less obvious because they are not relevant in the first step of the protocol. That is the perspective of $a$ where at some point in the past, they failed to receive a message, and therefore on this particular step of the protocol did not send any message. All told, this gives the action model drawn in Figure \ref{3Figure33}.  For example, perspective $2$ is one where $P_b$ sent, but $b$ failed to receive. $4$ is where $P_b$ sent and received, and $6$ is where $b$ did not send. Same for the $a$ perspectives. The most nonideal thing about this setup is the need for the $\neg P_a$ pre and post-condition in $4$, and the similar setup in $3$. These will be necessary to prevent unwanted duplications, corresponding to receiving $\neg P_a$ and then $P_a$, in the setup where the bit values are not fixed in the input, as we shall see. But as presented, they do seem overly restricted. Other workarounds would be ideal, though non are forthcoming. This too would be a useful avenue for future research.

\begin{figure}
	$$\begin{tikzcd}
		{\color{blue}2\color{black}(P_b,S)[\neg S]} \arrow[r, no head] \arrow[rd, no head] & {\color{red}3\color{black}(\neg P_a,P_b,S)[P_b]} \arrow[r, no head]           & {\color{blue}4\color{black}(\neg P_a,P_b,S)[\neg P_a]} \arrow[r, no head] & {\color{red}5\color{black}(\neg P_a,S)[\neg S]} \arrow[ld, no head] \\
		& {\color{red}1\color{black}(\neg P_a,\neg S)} \arrow[r, no head] & {\color{blue}6\color{black}(P_b,\neg S)}          &                                                               
	\end{tikzcd}$$\caption{The action for the restricted case of the ``Layered Message Passing Protocol'' where $a$ has bit value $0$ and $b$ has bit value $1$.}\label{3Figure33}
\end{figure}

\begin{figure}
	$$\begin{tikzcd}
		{\color{red}(a,5)\color{black}(\neg P_a,\neg S)} \arrow[r, no head] & {\color{blue}(b,4)\color{black}(\neg P_a,P_b,S)} \arrow[r, no head] & {\color{red}(a,3)\color{black}(\neg P_a,P_b,S)} \arrow[r, no head] & {\color{blue}(b,2)\color{black}(P_b,\neg S)}
	\end{tikzcd}$$\caption{The output of applying the action in Figure \ref{3Figure33} to the input in Figure \ref{3Figure32}, reprsenting a single step of the restricted case of the ``Layered Message Passing Protocol''.}\label{3Figure34}
\end{figure}

If we apply the action model to the input a single time, we get the drawing in Figure \ref{3Figure34}, which is as given on page 31 of \cite{DCTCT}. If we apply it again, we get the output as drawn in Figure \ref{3Figure35}, which is again as desired. 

\begin{figure}
	$$\adjustbox{scale=.85,center}{\begin{tikzcd}
			{\color{blue}((b,4),2)\color{black}(\neg P_a,P_b,\neg S)} \arrow[r, no head] \arrow[rd, no head] & {\color{red}((a,3),3)\color{black}(\neg P_a,P_b,S)} \arrow[r, no head] & {\color{blue}((b,4),4)\color{black}(\neg P_a,P_b,S)} \arrow[r, no head] & {\color{red}((a,3),5)\color{black}(P_b,\neg P_a,\neg S)} \arrow[ld, no head] \\
			& {\color{red}((a,5),1)\color{black}(\neg P_a,\neg S)}               & {\color{blue}((b,2),6)\color{black}(P_b,\neg S)}               &                                                                      
	\end{tikzcd}}$$\caption{The output of applying the action in Figure \ref{3Figure33} to the input in Figure \ref{3Figure34}, reprsenting the second step of the restricted case of the ``Layered Message Passing Protocol''.}\label{3Figure35}
\end{figure}

Now we want to apply this intuition to a more complicated setting, where the bit values are not predetermined in the input. This input is drawn in Figure \ref{3Figure36}.

\begin{figure}
	$$\begin{tikzcd}
		{\color{blue}b_1\color{black}(P_b,S)} \arrow[d, no head] \arrow[r, no head] & {\color{red}a_0\color{black}(\neg P_a,S)} \arrow[d, no head] \\
		{\color{red}a_1\color{black}(P_a,S)} \arrow[r, no head]                     & {\color{blue}b_0\color{black}(\neg P_b,S)}                  
	\end{tikzcd}$$\caption{The full input for the ``Layered Message Passing Protocol''.}\label{3Figure36}
\end{figure}

Delightfully, the action model is easy to describe given the work we've already done. We simply need four copies of the action model previously, corresponding with the four combinations of pairs of bit values. We then stitch these together and we get Figure \ref{3Figure37}.

\begin{figure}
	$$\adjustbox{scale=.85,center}{\begin{tikzcd}
			{\color{blue}(1,4)\color{black}(P_b,S)[\neg S]} \arrow[r, no head] \arrow[rd, no head] \arrow[d, no head] \arrow[rrdd, no head, bend left] & {\color{red}(2,4)\color{black}(\neg P_a,P_b,S)[P_b]} \arrow[r, no head]                & {\color{blue}(3,4)\color{black}(\neg P_a,P_b,S)[\neg P_a]} \arrow[r, no head] & {\color{red}(4,4)\color{black}(\neg P_a,S)[\neg S]} \arrow[ld, no head] \arrow[d, no head] \arrow[lldd, no head, bend left] \\
			{\color{red}(1,3)\color{black}(P_a,P_b,S)[P_b]} \arrow[d, no head]                                                                         & {\color{red}(2,3)\color{black}(\neg P_a,\neg S)} \arrow[r, no head] \arrow[d, no head] & {\color{blue}(3,3)\color{black}(P_b,\neg S)} \arrow[d, no head]     & {\color{blue}(4,3)\color{black}(\neg P_a,\neg P_b,S)[\neg P_a]} \arrow[d, no head]                                                    \\
			{\color{blue}(1,2)\color{black}(P_a,P_b,S)[P_a]} \arrow[d, no head]                                                                        & {\color{blue}(2,2)\color{black}(\neg P_b,\neg S)}                                      & {\color{red}(3,2)\color{black}(P_a,\neg S)}                         & {\color{red}(4,2)\color{black}(\neg P_a,\neg P_b,S)[\neg P_b]} \arrow[d, no head]                                                     \\
			{\color{red}(1,1)\color{black}(P_a,S)[\neg S]} \arrow[ru, no head] \arrow[r, no head] \arrow[rruu, no head, bend left]                     & {\color{blue}(2,1)\color{black}(P_a,\neg P_b,S)[P_a]} \arrow[r, no head]               & {\color{red}(3,1)\color{black}(P_a,\neg P_b,S)[\neg P_b]} \arrow[r, no head]  & {\color{blue}(4,1)\color{black}(\neg P_b,S)[\neg S]} \arrow[lu, no head] \arrow[lluu, no head, bend left]                  
	\end{tikzcd}}$$\caption{The full action for the ``Layered Message Passing Protocol''.}\label{3Figure37}
\end{figure}

Applying the action model once gives us the model in Figure \ref{3Figure38}, exactly as we would want per page 31 of \cite{DCTCT}. Applying it a second time gives us the model in Figure \ref{3Figure39}. The values of variables are left off for ease of reading. It is here that it is important we include the non-ideal additional pre- and post- conditions. Otherwise, we could have perspectives in the above like $(a_0,(2,4),(4,2))$, which would be where $a_0$ first learns $P_b$ and then $\neg P_b$. Nevertheless, we can show that the protocol applied $n$ times is a square with sidelength $2n-1$:

\begin{figure}
	$$\adjustbox{scale=.85,center}{\begin{tikzcd}
			{\color{blue}(b_1,(1,4))\color{black}(P_b,\neg S)} \arrow[r, no head] \arrow[d, no head] & {\color{red}(a_0,(2,4))\color{black}(\neg P_a,P_b,S)} \arrow[r, no head]  & {\color{blue}(b_1,(3,4))\color{black}(\neg P_a,P_b,S)} \arrow[r, no head] & {\color{red}(a_0,(4,4))\color{black}(\neg P_a,\neg S)} \arrow[d, no head] \\
			{\color{red}(a_1,(1,3))\color{black}(P_a,P_b,S)} \arrow[d, no head]                      &                                                                      &                                                                      & {\color{blue}(b_0,(4,3))\color{black}(\neg P_a,\neg P_b,S)} \arrow[d, no head] \\
			{\color{blue}(b_1,(1,2))\color{black}(P_a,P_b,S)} \arrow[d, no head]                     &                                                                      &                                                                      & {\color{red}(a_0,(4,2))\color{black}(\neg P_a,\neg P_b,S)} \arrow[d, no head]  \\
			{\color{red}(a_1,(1,1))\color{black}(P_a,\neg S)} \arrow[r, no head]                     & {\color{blue}(b_0,(2,1))\color{black}(P_a,\neg P_b,S)} \arrow[r, no head] & {\color{red}(a_1,(3,1))\color{black}(P_a,\neg P_b,S)} \arrow[r, no head]  & {\color{blue}(b_0,(4,1))\color{black}(\neg P_b,\neg S)}                  
	\end{tikzcd}}$$\caption{The output of applying the action in Figure \ref{3Figure37} to the input in Figure \ref{3Figure36}, representing the a single step of the full ``Layered Message Passing Protocol''. }\label{3Figure38}
\end{figure}

\begin{figure}
	$$\adjustbox{scale=.6,center}{\begin{tikzcd}
			{\color{blue}((b_1,(1,4)),(3,3))} \arrow[r, no head] & {\color{red}((a_0(2,4)),(4,4))} \arrow[r, no head]   & {\color{blue}((b_1(3,4)),(3,4))} \arrow[r, no head] & {\color{red}((a_0,(2,4)),(2,4))} \arrow[r, no head]  & {\color{blue}((b_1(3,4)),(1,4))} \arrow[r, no head] & {\color{red}((a_0,(4,4)),(2,3))} \arrow[d, no head]  \\
			{\color{red}((a_1,(1,3)),(1,1))} \arrow[u, no head]  &                                                      &                                                     &                                                      &                                                     & {\color{blue}((b_0,(4,3)),(4,1))} \arrow[d, no head] \\
			{\color{blue}((b_1,(1,2)),(1,2))} \arrow[u, no head] &                                                      &                                                     &                                                      &                                                     & {\color{red}((a_0,(4,2)),(4,2))} \arrow[d, no head]  \\
			{\color{red}((a_1,(1,3)),(1,3))} \arrow[u, no head]  &                                                      &                                                     &                                                      &                                                     & {\color{blue}((b_0,(4,3)),(4,3))} \arrow[d, no head] \\
			{\color{blue}((b_1,(1,2)),(1,4))} \arrow[u, no head] &                                                      &                                                     &                                                      &                                                     & {\color{red}((a_0,(4,2)),(4,4))} \arrow[d, no head]  \\
			{\color{red}((a_1,(1,1)),(3,2))} \arrow[u, no head]  & {\color{blue}((b_0,(2,1)),(4,1))} \arrow[l, no head] & {\color{red}((a_1,(3,1)),(3,1))} \arrow[l, no head] & {\color{blue}((b_0,(2,1)),(2,1))} \arrow[l, no head] & {\color{red}((a_1,(3,1)),(1,1))} \arrow[l, no head] & {\color{blue}((b_0,(4,1)),(2,2))} \arrow[l, no head]
	\end{tikzcd}}$$\caption{The output of applying the action in Figure \ref{3Figure37} to the input in Figure \ref{3Figure38}, representing the second step of the full ``Layered Message Passing Protocol''. }\label{3Figure39}
\end{figure}

The above examples are our base case. Suppose Output(n) is a square of sidelength $2(n+1)$. The center two nodes on each edge are labeled $S$, all other nodes $\neg S$, and every node but the corners contains a unique $a_i,b_j$ pair. Applying the action duplicates every node but the centers of the edges of the input. These are duplicated by the center four nodes of the action, namely $(2,3)$, $(3,3)$, $(2,2)$, and $(3,2)$. The two center nodes of each edge are first copied by their respective $a_i,b_j$ pair. These copies will be the center two nodes of the string of four nodes that will replace each center pair. In this case, the $a_0$ $b_1$ edge are copied by $(2,4)$, $(3,4)$. Then the $a$ is copied additionally by the $a_i$ corner, and the $b$ node is copied additionally by the $b_j$ corner. These are the outer two nodes of that string of four nodes. No other copies are possible, as these nodes specify both the $a_i$ and $b_j$ value, and are $S$ nodes. So, as desired, we have a single copy of every node except the center of each edge, which gets two copies each, as desired.

\subsection{Incorporating Revision} \label{3sec:Rev}

There are two definitions. There's updating by a belief action, and revising by a belief action.

\begin{definition}[Simplicial Action for Belief]
	A \defin{Simplicial Action for Belief} $A$ is a tuple $\langle N_A,V_A,L_A,\{S_{a,A}\}_{a\in Ag},Post\rangle$, where $Post:N_a\rightarrow 3^\mathfrak{P}$.\footnote{$N_A$ is a set of nodes, $V_A:N_A\rightarrow Ag$, $L_A:N_A\rightarrow 3^\mathfrak{P}$, and each $S_{a,A}$ is a UCF subcomplex of $\mathfrak{M}(N_A,V_A,L_A)$, just as in a regular simplicial model.}
	
	Given a simplicial model for belief $\mathcal{M}=\langle N,V,L,S\rangle$, And a simplicial action for belief $A=\langle N_A,V_A,L_A,\{S_{a,A}\}_{a\in Ag},Post\rangle$, we define the update model $\mathcal{M}[A]_B:=\langle N[A]_B,V[A]_B,L[A]_B,\{S_a[A]_B\}_{a\in Ag}\rangle$\footnote{$N[A]_B$ is a subset of $N\times N_A$, $V[A]_B:N[A]_B\rightarrow Ag$, $L[A]_B:N[A]_B\rightarrow 3^\mathfrak{P}$, and $S_a[A]_B$ is a UCF subcomplex of $\mathfrak{M}(N[A]_B,V[A]_B,L[A]_B)$, just as in a regular simplicial model.} as follows:
	
	\begin{align*}
		N[A]_B&:=\{(x,y)\in N\times N_A|V(x)=V_A(y)\\
		&\wedge\forall P\in\mathfrak{P}((L(x)(P)=1\rightarrow L_A(y)(P)\neq 0)\wedge(L(x)(P)=0\rightarrow L_A(y)(P)\neq 1))\}
		\\ V[A]_B((x,y))&:=V(x)
		\\ L[A]_B((x,y))(P)&:=\begin{cases}
			& Post(y)(P)\;\text{if}\;Post(y)(P)\neq 2 \\
			& L_A(y)(P)\;\text{if}\;L(x)(P)=2 \\
			&L(x)\;\text{otherwise}
		\end{cases} 
		\\ \mathcal{F}(S_a[A]_B)&:=\left \{F\in 2^{N[A]_B}~\middle |~\begin{matrix}
			& \pi_0(F)\in\mathcal{F}(S_a) \\
			& \wedge\pi_1(F)\in\mathcal{F}(S_{a,A}) \\
			& \wedge F\text{ is consistent w.r.t. }L[A]_B \\
		\end{matrix} \right \}
	\end{align*}
\end{definition}

We should check that the update model in this setting is UCF.

\begin{theorem}\label{3thm:simpbelUCF} Let $\M$ be a simplicial model for belief and $A$ a simplicial action model for belief. Then the output model $\mathcal{M}[A]_B$ also satisfies the UCF property.
	\begin{proof}
		Proof is almost identical to knowledge case. Showing each $S_a[A]_B$ is UCF is the same as showing $S[A]$ is UCF. More detail in Section \ref{3prf:simpbelUCF}.
	\end{proof}
\end{theorem}

We would like to say that this notion of an action is a protocol. However, the notion of protocol we used previously, the one given in \cite{DCTCT}, is insufficient for our purposes. Intersection preservation is meant to ensure that no agent ``knows'' whether they came from face $X$ or $Y$. Since we have shifted over to belief, that is no longer available to us. We need the corresponding version of the intersection property which uses the definition of truth for modalities in the belief setting. In particular, we will need an execution map for each agent. But, it will only be relevant that the $a$-colored execution map preserve the intersections of $a$-colored vertices.\footnote{This is especially relevant in the revision setting.} This motivates the following definition of a \textbf{Belief Protocol}:

\begin{definition}[Belief Protocol]
	A \defin{Belief Protocol} is a triple $(\mathcal{I},\mathcal{P},\{\Xi_a\}_{a\in Ag})$ where $\mathcal{I}$ is a simplicial model for belief, called the \defin{``Input Complex''}, and $\mathcal{P}$ is a simplicial model for belief called the \defin{``Protocol Complex''} (better thought of in our context as the output complex). Let $\mathcal{I}_a$ denote the $a$-complex for $\mathcal{I}$ and similarly for $\mathcal{P}_a$. For each $a\in Ag$ a function $\Xi_a:\mathcal{I}\rightarrow2^\mathcal{P}$ is called the \defin{``Execution Map''}, which takes faces in $\mathcal{I}_a$ to sets of faces in $\mathcal{P}_a$ such that if $\sigma\subseteq\tau$, $\Xi_a(\sigma)\subseteq\Xi_a(\tau)$, and $\Xi_a(\pi_a(\sigma\cap\sigma'))=\pi_a(\Xi_a(\sigma))\cap\pi_a(\Xi_a(\sigma'))$. Moreover, for any face $\sigma$ in $\mathcal{I}_a$, $\{a\in Ag~|~\exists x\in\sigma(V_\mathcal{I}(x)=a)\}=\{a\in Ag~|~\exists X\in\Xi_a(\sigma),x\in X(V_\mathcal{P}(x)=a)\}$. When this is true we say that $\Xi_a$ is \defin{chromatic}. And, additionally, for each $a\in Ag$, $\mathcal{P}_a=\bigcup_{\sigma\in\mathcal{I}}\Xi_a(\sigma)$.
\end{definition}

We now show the sense in which action models for belief are protocols. Let $I$ be a simplicial model for belief whose $a$ complex is denoted $I_a$, and $A$ a simplicial action for belief. As before, intuitively, our protocol complex should be given by $I[A]_B$. What we need to define is for each $a\in Ag$ an execution map $\Xi_a:I_a\rightarrow 2^{I_a[A]_B}$. As it turns out, the correct map is given by the analogous map to the knowledge case:

$$\Xi_a(F):=\{X\in I_a[A]_B~|~\pi_0(X)\subseteq F\}$$

Where $X\in I_a[A]_B$ is shorthand for ``$X$ is a face in the simplicial complex of $I_a[A]_B$.'' This gives us the following:

\begin{theorem}\label{3thm:simpbelprot} Let $I$ be a simplicial model for belief and $A$ an action model. The tuple $(I,I[A]_B,\{\Xi_a\}_{a\in Ag})$ is a Belief Protocol.
	\begin{proof}
		Section \ref{3prf:simpbelprot}
	\end{proof}
\end{theorem}

Morally, this is true because of the following. Chromaticity follows roughly because $V[A]_B((x,y))=V(x)$. This then entails the intersection property roughly as follows. If $F$ and $G$ share an $a$-node, call it $x$, $\Xi_a$ will send that $a$-node to a collection of $a$-nodes in $I[A]_B$. In particular, $\Xi_a(\{x\})$ is the set of all singletons $\{(x,y)\}$ in $I_a[A]_B$. Now consider $\Xi_a(F)$. By definition, this is the set of all faces $X$ in $I_a[A]_B$ such that $\pi_0(X)\subseteq F$. Since $\pi_0(X)=\{z~|~(z,y)\in X\}$, if $X\in\Xi_a(F)$, then $\pi_a(\pi_0(X))=\{x\}$. Hence, $\Xi_a(F)$ contains the set of all faces $X$ in $I_a[A]_b$ whose $a$-node is a tuple $(x,y)$. Hence, $\pi_a(\Xi_a(F))$ is the set of all singletons $\{(x,y)\}$ in $I_a[A]_B$. The same reasoning applies to $G$ as to $F$. So, we can conclude that 

$$\Xi_a(\pi_a(F\cap G))=\pi_a(\Xi_a(F))=\pi_a(\Xi_a(G))$$

as desired.

Now we need to modify this definition of an action to use the notion of revision from \cite{SimpBelRev}. The key idea is to modify the facets from $\mathcal{F}(S_a[A]_B)$. 

\begin{definition}[Simplicial Action for Belief with Revision]
	Given a simplicial model for belief $\mathcal{M}=\langle N,V,L,S\rangle$, and a simplicial action for belief $A=\langle N_A,V_A,L_A,\{S_{a,A}\}_{a\in Ag},Post\rangle$, we define the update model \defin{with revision} $\mathcal{M}[A]_{BR}:=\langle N[A]_{BR},V[A]_{BR},L[A]_{BR},\{S_a[A]_{BR}\}_{a\in Ag}\rangle$\footnote{$N[A]_B$ is a subset of $N\times N_A$, $V[A]_{BR}:N[A]_{BR}\rightarrow Ag$, $L[A]_{BR}:N[A]_{BR}\rightarrow 3^\mathfrak{P}$, and $S_a[A]_{BR}$ is a UCF subcomplex of $\mathfrak{M}(N[A]_{BR},V[A]_{BR},L[A]_{BR})$, just as in a regular simplicial model.} as follows:
	
	\begin{align*}
		N[A]_{BR}&:=\{(x,y)\in N\times N_A|V(x)=V_A(y)\\
		&\wedge\forall P\in\mathfrak{P}((L(x)(P)=1\rightarrow L_A(y)(P)\neq 0)\wedge(L(x)(P)=0\rightarrow L_A(y)(P)\neq 1))\}
		\\ V[A]_{BR}((x,y))&:=V(x)
		\\ L[A]_{BR}((x,y))(P)&:=\begin{cases}
			& Post(y)(P)\;\text{if}\;Post(y)(P)\neq 2 \\
			& L_A(y)(P)\;\text{if}\;L(x)(P)=2 \\
			&L(x)\;\text{otherwise}
		\end{cases} 
	\end{align*}
	
	Defining $S_a[A]_{BR}$ will take some more work. Fix $a\in Ag$ and let $X$ be a UCF facet in $2^{N\times N_A}$ such that $\pi_0(X)\in\mathcal{F}(S_a)$ and $\pi_1(X)\in\mathcal{F}(S_{a,A})$. Call the set of such $X$ $\mathcal{F}^a(N\times N_A)$. Let $X$ be a UCF facet in $2^{N[A]_{BR}}$ such that $X$ is consistent, and $\pi_1(X)\in\mathcal{F}(S_{a,A})$. Call the set of such $X$ $\mathcal{F}_a(2^{N[A]_{BR}})$. 
	
	Then we can define a replacement function as follows for $X\in\mathcal{F}^a(N\times N_A)$:
	
	$$R_a(X):=\{Y\in\mathcal{F}_a(2^{N[A]_{BR}})|\pi_a(Y)=\pi_a(X)$$
	$$\wedge(\forall Z\in\mathcal{F}_a(2^{N[A]_{BR}})(\pi_a(Z)=\pi_a(X)\rightarrow|Y\cap X|\geq|Z\cap X|))\}$$
	
	We say that $S_a[A]_{BR}$ is defined as the simplicial complex whose facets come from $\bigcup_{X\in\mathcal{F}^a(N\times N_A)}R_a(X)$.
\end{definition}

\begin{theorem}\label{3thm:simpbelrevUCF} Let $\M$ be a simplicial model for belief and $A$ a simplicial action model for belief. Then the output model $\mathcal{M}[A]_{BR}$ also satisfies the UCF property.
	\begin{proof}
		Section \ref{3prf:simpbelrevUCF}
	\end{proof}
\end{theorem}

Now we need to show that simplicial actions for belief with revision are belief protocols. To do this, we need the following notion. Note that if $Y\in\F(S_a[A]_{BR})$, then $R_a^{-1}(Y)$ is the preimage. We need to define this preimage for all faces in $S_a[A]_BR$. Given a face $X\in S_a[A]_{BR}$, say that $R_a^{-1}(X)$ is the set of faces which subset of all facets

\begin{align*}
	R_a^{-1}(X):=\{X'~|~&\exists Y\in\F(S_a[A]_{BR})(X\subseteq Y\wedge\exists Y'\in R_a^{-1}(Y)(X'\subseteq Y'))\\&\wedge\forall a\in Ag(\exists x\in X(V[A]_{BR}(x)=a)\leftrightarrow \exists x'\in X'(V[A]_{BR}(x')=a))\}
\end{align*}

\begin{theorem}\label{3thm:simpbelrevprot} Let $I$ be a simplicial model for belief and $A$ an action model. Define $\Xi_{a,BR}$ similarly to $\Xi_a$:
	
	$$\Xi_{a,BR}(F):=\{X\in I_a[A]_{BR}~|~\forall X'\in R_a^{-1}(X)(\pi_0(X')\subseteq F)\}$$
	
	Then, the tuple $(I,I[A]_{BR},\{\Xi_{a,BR}\}_{a\in Ag})$ is a Belief Protocol.
	\begin{proof}
		Section \ref{3prf:simpbelrevprot}
	\end{proof}
\end{theorem}

Morally, this is true because the replacement function takes facets to sets of facets that, among other things, share $a$-colored nodes. Hence, the same moral argument we gave above still applies to show the intersection property.

\section{Advantages of Action Models}

In the introduction for this paper, we suggested that it would be interesting to explore connecting action models and simplicial semantics since both have tight connections to distributed computing. Now that we have done so, we can talk more concretely about specific advantages afforded by modeling protocols using simplicial actions. The first concerns the epistemic intuitiveness of actions. In all of the examples above, we were able to create the formal object of an action for a specific protocol simply by carefully considering an agent's sender and receiver behavior. As a result, modifying actions becomes quite simple. For example, consider the action model drawn in Figure \ref{3FigureClassicAction}.\footnote{Note that this is the simplicial translation of the action given on page 22 of \cite{sus}.} This action describes ``$a$ sends $P_a$ to $b$ with a chance of failure, and $c$ is unaware that any message is sent.'' This is easy to read off of the action. The perspective $a_S$ is $a$'s perspective where they send the message. They do not know if $b$ will receive it, hence they consider a facet possible where $b$ is in perspective $b_R$, where the message is received, and another $b_{\sim R}$, where it is not. When the message is received, $b$ knows this, and so there is only one facet in $S_b$ containing $b_R$. Agent $c$, by contrast, only considers a single world possible, namely one where no message was sent. They also believe every other agent only considers this world possible. This is the facet $\{a_{\sim S}, b_{\sim R},c\}$, which is in all three of $S_a$, $S_b$, and $S_c$.

\begin{figure}
	$$\begin{tikzcd}
		{\color{red} a_{\color{black}S}}\color{black}(P_a) \arrow[blue, rd, no head, bend right] \arrow[blue, rr, no head, bend left] \arrow[red, dd, no head] \arrow[red, rd, no head] \arrow[red, rr, no head] &                                                                                                                & \color{blue} b_{\color{black}R}\color{black}[P_a] \\
		& \color{green} c\color{black}(\top) \arrow[rd, no head] \arrow[red, ru, no head] \arrow[blue, ru, no head, bend right] &                                 \\
		{\color{blue} b_{\color{black}\sim R}}\color{black}(\top) \arrow[ru, no head] \arrow[rr, no head]                                                                                   &                                                                                                                & \color{red} a_{\color{black}\sim S}\color{black}(\top) 
	\end{tikzcd}$$\caption{The simplicial action model describing ``$a$ sends $P_a$ to $b$ with a chance of failure, and $c$ is unaware that any message is sent.''}\label{3FigureClassicAction}
\end{figure}

Suppose we wanted to modify this protocol so that $c$ was aware that a message containing $P_a$ may have been sent, but they are unsure, and if it was sent, they are unsure that it was received. Modifying Figure \ref{3FigureClassicAction} to account for this is straightforward. Agent $c$ simply considers more facets possible. Namely, the facets where $P_a$ was sent and received, and also where it was sent but not received. This is done by adding $\{a_S,b_R,c\}$ and $\{a_S,b_{\sim R}, c\}$ to $\mathcal{F}(S_c)$. The result is drawn in Figure \ref{3FigureClassModAct}.

\begin{figure}
	$$\begin{tikzcd}
		{\color{red} a_{\color{black}S}}\color{black}(P_a) \arrow[red, dd, no head] \arrow[green, dd, no head, bend right] \arrow[black, rd, no head] \arrow[green, rd, no head, bend left] \arrow[red, rd, no head, bend right] \arrow[black, rr, no head] &                                                                                                                & \color{blue} b_{\color{black}R}\color{black}[P_a] \\
		& \color{green} c\color{black}(\top) \arrow[rd, no head] \arrow[black, ru, no head] &                                 \\
		{\color{blue} b_{\color{black}\sim R}}\color{black}(\top) \arrow[ru, no head] \arrow[green, ru, no head, bend right] \arrow[red, ru, no head, bend left] \arrow[rr, no head]                                                                                   &                                                                                                                & \color{red} a_{\color{black}\sim S}\color{black}(\top) 
	\end{tikzcd}$$\caption{The simplicial action model describing ``$a$ sends $P_a$ to $b$ with a chance of failure, and $c$ thinks it's possible that the message was successfully sent, unsuccessfully sent, or no message was sent at all.''}\label{3FigureClassModAct}
\end{figure}

Another reason describing protocols as actions is advantageous is that protocols disentangle the ``sender/receiver'' behavior from the input model. In the usual definition of protocol, repeated above, the input/output complexes are specified, and the protocol is given as a morphism specifying which facets map where.\cite{DCTCT} Hence, the protocol can't be distentangled from a particular input.\footnote{This is also true of the definition of protocol given in \cite{TopPersp}, though we will say more about this presentation below.} By contrast, actions can be applied to any input model with a sensible result as the output. For example, the actions drawn in Figure \ref{3FigureClassicAction} and Figure \ref{3FigureClassModAct} could be applied to any input whose vertices correspond to six perspectives where each of three agents has an assigned bit value. For example, these actions could be applied to the model drawn in Figure \ref{3fgr:SimpBelEx1}, or any UCF complex drawn on those same vertices.

Another advantage of action models is that, in addition to separating out ``signals'' from ``actions'', they also compartmentalize information in a convenient way. Consider again the ``Alternating Message Passing Protocol''. In the language of \cite{DCTCT}, this is a ``layered'' protocol, as it transforms over many steps. So, to represent it in the traditional way, we require an infinite sequence of complexes. We would start with the drawing in Figure \ref{3Figure26}, then the drawing in Figure \ref{3Figure31}, followed by another drawing with the branches extended a step further, and so on. This sequence is also drawn on page 30 of \cite{DCTCT}. However, we are able to represent this protocol \textit{entirely} with two simplicial models, rather than an infinite sequence. The input, given in Figure \ref{3Figure26}, and the action, given in \ref{3Figure30}, suffice. As we proved, any action specifies a protocol. So, we can describe at least some of these more complicated layered protocols with simply an input and an action, which we understand to be applied at each stage, rather than with an infinite sequence of models.

This disentangling of input and protocol affords more opportunities for testing protocols as well. In the near future, we are hoping to be able to implement the application of simplicial actions to inputs in a language like Python, in order to be able to consider more and larger examples more easily. Since the application of actions to inputs generates very large models very quickly, this would seem to be essential for further research. However, in such a setting, the separation of inputs and protocols would appear to be an advantage. Given an appropriate specification of vertices $N$, $V$, and $L$, perhaps a protocol is ``successful'' on (given a concrete definition of successful) \textit{most}, but not all, UCF subcomplexes of $\mathfrak{M}(N,V,L)$. We could then evaluate if cases where the protocol fails are sufficiently pathological as to be avoided or considered irrelevant. This testing environment would also allow for the ability to test belief revision itself in a philosophically exciting way. Given that applying a simplicial belief action with or without revision generates a belief protocol, we should be able to directly compare the performance of various protocols with and without revision. This would let us evaluate in an empirical sense when and how revision gives learners an advantage, which would be an exciting philosophical argument.

The last advantage of the action model presentation of protocols is that we might be able to leverage the topological properties of simplicial action models in the description of protocols. Leveraging such topological properties is a core part of the application of combinatorial topology to distributed computing. \cite{DCTCT} However, ours is not the first example of a topological characterization of protocols themselves (not just the inputs and outputs). The paper ``A topological perspective on distributed network algorithms'', for example, has already done this. \cite{TopPersp} We will briefly recount a core example from that paper. Consider the knowledge complex drawn in Figure \ref{3FigureTopPersp1}. Here, agents have a privately held bit value, and there is total uncertainty. The goal will be for agents to send messages in such a sequence that they could plausibly come to consensus. There is no need for the majority to win, either. If two agents initially have $0$, but ultimately all agents agree to flip to 1, this is considered a success.\footnote{However, if all agents have $0$, they must agree on $0$ in the end.} Hence, the desired output complex consists of two disconnected facets, drawn in Figure \ref{3FigureTopPersp2}.

\begin{figure}
	$$\begin{tikzcd}
		a_1(P_a) \arrow[rrd, no head] \arrow[rrdd, no head] \arrow[ddd, no head] &  &                          &  & b_1(P_b) \arrow[llll, no head] \arrow[lldd, no head] \arrow[ddd, no head] \\
		&  & c_1(P_c) \arrow[rru, no head] &  &                                                                      \\
		&  & c_0(\neg P_c) \arrow[lld, no head] &  &                                                                      \\
		b_0(\neg P_b) \arrow[rrrr, no head] \arrow[rruu, no head]                     &  &                          &  & a_0(\neg P_a) \arrow[llu, no head] \arrow[lluu, no head]                      
	\end{tikzcd}$$\caption{Knowledge complex for the initial configuration of three agents, each with a private bit value, with total uncertainty.}\label{3FigureTopPersp1}
\end{figure}

\begin{figure}
	$$\begin{tikzcd}
		a_1(P_a) \arrow[rrd, no head]  &  &                          &  & b_1(P_b) \arrow[llll, no head] \\
		&  & c_1(P_c) \arrow[rru, no head] &  &                           \\
		&  & c_0(\neg P_c) \arrow[lld, no head] &  &                           \\
		b_0(\neg P_b) \arrow[rrrr, no head] &  &                          &  & a_0(\neg P_a) \arrow[llu, no head] 
	\end{tikzcd}$$\caption{Action describing ``$a$ sends their bit value to both $b$ and $c$.''}\label{3FigureTopPersp2}
\end{figure}

In \cite{TopPersp}, they consider two different protocols as potential solutions to this problem. The first involves a cycle of messages: $a$ signals $b$ her value, $b$ signals $c$ her value, and $c$ signals $a$ her value. It is very important to note that they consider protocols in the ``know-all'' setting. Hence, all agents are aware that this is the protocol, and there is no uncertainty. This is called the ``structure awareness'' assumption. Moreover, they effectively describe protocols geometrically as what we would call the application of a protocol to an action. That is, for an input $I$ and an action $A$, their definition of a ``protocol complex'' is effectively what we would call $I[A]$, though they do not separately specify the actions. Hence, they define a ``protocol complex'' as the result of applying a sequence of signals to an input. Moreover, because of the ``structural awareness'' assumption, there must be a bijection between the facets of the input and the facets of the protocol complex. This is because they assume that there is common knowledge of what the input worlds are, what the messages are, who sends them, and who receives them. So every agent knows exactly what messages will be sent and received at any given world. Obviously, action model descriptions of protocols, as given above, have no such restrictions, even in merely the knowledge setting.

The result of applying the cyclical sequence of messages to Figure \ref{3FigureTopPersp1} is drawn in Figure \ref{3FigureTopPersp3}. Each perspective is split, depending on whether they heard a $1$ or a $0$. However, crucially, the complex remains connected.

\begin{figure}
	$$\adjustbox{scale=.9,center}{\begin{tikzcd}
			& {a_1(P_a,P_c)} \arrow[rrd, no head] \arrow[rrdd, no head]             &  &                                                                                            &  & {b_1(P_b,P_a)} \arrow[llll, no head] \arrow[llllld, no head]         &                                                                  \\
			{a_1(P_a,\neg P_c)} \arrow[rrrdd, no head] \arrow[ddd, no head] &                                                                       &  & {c_1(P_c,P_b)} \arrow[rru, no head] \arrow[rrrddd, no head]                                &  &                                                                      & {b_1(P_b,\neg P_a)} \arrow[lll, no head] \arrow[llldd, no head]  \\
			&                                                                       &  & {c_1(P_c,\neg P_b)} \arrow[rrrdd, no head] \arrow[llldd, no head]                          &  &                                                                      &                                                                  \\
			&                                                                       &  & {c_0(\neg P_c,P_b)} \arrow[rrdd, no head] \arrow[rruuu, no head]                           &  &                                                                      &                                                                  \\
			{b_0(\neg P_b,P_a)} \arrow[ruuuu, no head]                      &                                                                       &  & {c_0(\neg P_c,\neg P_b)} \arrow[lld, no head] \arrow[llluuu, no head] \arrow[lll, no head] &  &                                                                      & {a_0(\neg P_a,P_c)} \arrow[uuu, no head] \arrow[llllld, no head] \\
			& {b_0(\neg P_b,\neg P_a)} \arrow[rrrr, no head] \arrow[rruuu, no head] &  &                                                                                            &  & {a_0(\neg P_a,\neg P_c)} \arrow[llu, no head] \arrow[ruuuu, no head] &                                                                 
	\end{tikzcd}}$$\caption{Knowledge complex for the protocol where $a$ signals $b$, $b$ signals $c$, and $c$ signals $a$.}\label{3FigureTopPersp3}
\end{figure}

By contrast, consider instead the sequence of signals ``$a$ signals both $b$ and $c$''. Agent $a$'s perspective would not duplicate, as they learn nothing. But agent $b$'s and agent $c$'s both do, as they learn $a$'s bit value. This result of this is drawn in Figure \ref{3FigureTopPersp4}. Crucially, this complex is disconnected. Because the target, too, is disconnected, it is necessary for the analysis given in \cite{TopPersp} that the protocol complex be disconnected as well. Hence, the first cyclical sequence of signals does not solve the task of consensus, but $a$ broadcasting does. This application of connectivity leverages this topological depiction of protocols.

\begin{figure}
	$$\adjustbox{scale=.9,center}{\begin{tikzcd}
			&                                                 &  &                                                                                            &  & {b_1(P_b,P_a)} \arrow[llllld, no head] \arrow[llddd, no head] &                                                            \\
			a_1(P_a) \arrow[ddd, no head] \arrow[rrr, no head] &                                                 &  & {c_1(P_c,P_a)} \arrow[rru, no head] \arrow[lllddd, no head]                                &  &                                                               & {b_1(P_b,\neg P_a)}                                        \\
			&                                                 &  & {c_1(P_c,\neg P_a)} \arrow[rrru, no head] \arrow[rrrdd, no head]                           &  &                                                               &                                                            \\
			&                                                 &  & {c_0(\neg P_c,P_a)} \arrow[llld, no head] \arrow[llluu, no head]                           &  &                                                               &                                                            \\
			{b_0(\neg P_b,P_a)}                                &                                                 &  & {c_0(\neg P_c,\neg P_a)} \arrow[lld, no head] \arrow[rrruuu, no head] \arrow[rrr, no head] &  &                                                               & a_0(\neg P_a) \arrow[uuu, no head] \arrow[llllld, no head] \\
			& {b_0(\neg P_b,\neg P_a)} \arrow[rruuu, no head] &  &                                                                                            &  &                                                               &                                                           
	\end{tikzcd}}$$\caption{Knowledge complex for the protocol where $a$ signals $b$ and $c$.}\label{3FigureTopPersp4}
\end{figure}

As mentioned before, \cite{TopPersp} is viewing ``protocol complexes'' as what we would call the application of an action to an input. So, let's draw these two protocol complexes instead as actions. Consider first the cycle, where $a$ signals $b$, $b$, signals $c$, and $c$ signals $a$. Considering both sender and receiver behavior, each agent has $4$ perspectives: they either send a 1 or 0, and they either receive a 1 or 0. In fact, there is no difference between our action and the protocol complex drawn in Figure \ref{3FigureTopPersp3}, save for how we label postconditions. This is drawn in Figure \ref{3Figure40}. One can check that the application of this action to the input is trivial, as it just redraws the action. Moreover, the application of this action to the input in Figure \ref{3FigureTopPersp1} does yield the model given in Figure \ref{3FigureTopPersp4}, which would be as desired.

\begin{figure}
	$$\adjustbox{scale=.9,center}{\begin{tikzcd}
			& {a_{1,1}(P_a)[P_c]} \arrow[rrd, no head] \arrow[rrdd, no head]             &  &                                                                                            &  & {b_{1,1}(P_b)[P_a]} \arrow[llll, no head] \arrow[llllld, no head]         &                                                                  \\
			{a_{1,0}(P_a)[\neg P_c]} \arrow[rrrdd, no head] \arrow[ddd, no head] &                                                                       &  & {c_{1,1}(P_c)[P_b]} \arrow[rru, no head] \arrow[rrrddd, no head]                                &  &                                                                      & {b_{1,0}(P_b)[\neg P_a]} \arrow[lll, no head] \arrow[llldd, no head]  \\
			&                                                                       &  & {c_{1,0}(P_c)[\neg P_b]} \arrow[rrrdd, no head] \arrow[llldd, no head]                          &  &                                                                      &                                                                  \\
			&                                                                       &  & {c_{0,1}(\neg P_c)[P_b]} \arrow[rrdd, no head] \arrow[rruuu, no head]                           &  &                                                                      &                                                                  \\
			{b_{0,1}(\neg P_b)[P_a]} \arrow[ruuuu, no head]                      &                                                                       &  & {c_{0,0}(\neg P_c)[\neg P_b]} \arrow[lld, no head] \arrow[llluuu, no head] \arrow[lll, no head] &  &                                                                      & {a_{0,1}(\neg P_a)[P_c]} \arrow[uuu, no head] \arrow[llllld, no head] \\
			& {b_{0,0}(\neg P_b)[\neg P_a]} \arrow[rrrr, no head] \arrow[rruuu, no head] &  &                                                                                            &  & {a_{0,0}(\neg P_a)[\neg P_c]} \arrow[llu, no head] \arrow[ruuuu, no head] &                                                                 
	\end{tikzcd}}$$\caption{Action model for the protocol where $a$ signals $b$, $b$ signals $c$, and $c$ signals $a$.}\label{3Figure40}
\end{figure}

The action for the other protocol is much more interesting by contrast. There are two $a$ perspectives, depending on what $a$ sends. For $b$ and $c$ each, there are also two perspectives, depending on what they receive. Since $a$ does not receive anything, and $b$ and $c$ do not send anything, this fully characterizes the set of perspectives. This is drawn in Figure \ref{3Figure41}. Note that this complex is disconnected, just as the target is. So, possibly, we can leverage topology similar to how it is done in \cite{TopPersp}. However, much more work needs to be done to generalize this.

\begin{figure}
	$$\begin{tikzcd}
		{c_{R1}[P_a]} \arrow[rrd, no head]       &  &                                       &  & {b_{R1}[P_a]} \arrow[llll, no head]     \\
		&  & a_{S1}(P_a) \arrow[rru, no head]      &  &                                         \\
		&  & a_{S0}(\neg P_a) \arrow[lld, no head] &  &                                         \\
		{b_{R0}[\neg P_a]} \arrow[rrrr, no head] &  &                                       &  & {c_{R0}[\neg P_a]} \arrow[llu, no head]
	\end{tikzcd}$$\caption{Action model for the protocol where $a$ signals $b$ and $c$.}\label{3Figure41}
\end{figure}

However, it should be noted that action models do seem to be far more general than the definition of protocol considered in \cite{TopPersp}. As mentioned above, the facets of a protocol complex must be in bijection with the input.\cite{TopPersp} As we have seen, our action model definition of protocol has no such restriction. This is in part because we are not beholden to the same ``awareness assumption'' as they are in that paper. Hence, agents can be uncertain of whether or not a particular message was sent, or of its contents, or of who the sender or receiver is, and so no. In belief protocols, agents can even have a false conception of what messages are being sent. These generalizations directly expand on the work in \cite{TopPersp}. Consider the following passage from the conclusion of \cite{TopPersp}, where they consider future work:

\begin{quote}``For instance, fully capturing the popular local model requires removing the structure awareness assumption, and studying the details of how the protocol complex evolves round after round.''\cite{TopPersp}\end{quote}

It is possible that, by analyzing protocols using actions, and in particular considering belief protocols, we have presented a model with sufficient generality to capture the ``local model''. However, more work would need to be done to check this. However, it is clear that we have generalized the conception of protocol given in \cite{TopPersp}. Moreover, the action presentation of protocols still equips them with a topology, which might be able to be leveraged in familiar ways, such as what is briefly explored above, and possibly novel ways we are not yet aware of. Following up on this is an exciting dimension for future research.

\section{Conclusion} \label{3sec:conc}

In this paper, we set out to give a new definition for action models in the setting of simplicial semantics. In particular, we wanted a notion of action which could apply to models for belief as in \cite{BelSimp}, so that we could incorporate the concept of revision from \cite{SimpBelRev}. As a result, we are able to, on the one hand, describe many existing, albeit simple, simplicial protocols as action models. Forthcoming work with the NASA Langley Research Center also uses these definitions to define a novel protocol. This protocol is a gossip protocol in the sense of \cite{Birman07}, and using this we are able to explore the possibilities of using revision, as defined here, as a further means of fault-tolerance in distributed protocols. However, exploring the use of revision as a means of fault-tolerance is still ongoing work.

This leads us to directions of future work. The first major direction for future work would be to implement action models in simplicial complexes in a simulation. In particular, a program which could compute action updates would be very helpful. Even small examples get large very quickly, and being able to work with more examples would help us to answer many questions, some of which we have already posed.

Firstly, does belief revision actually help with fault tolerance? If so, how? In what contexts? Answering this would require comparing and contrasting many protocols, ideally being able to compare protocols who only differ in whether or not they use the revision updating procedure. Getting clear on exactly how revision affects a protocol, especially after many stages, would also be philosophically exciting and would be a means of empirically testing the revision in a way that could speak to its value as a model of learning.

Secondly, what are the limitations of representing protocols using actions? A question this paper has not addressed is the actual, provable, scope of using action models to represent protocols. Are there protocols which cannot be represented using action models? If so, which ones? Why? A place to start would seem to be to take the definition of protocol from \cite{DCTCT} and try to see if one can prove that all such protocols could be defined as actions.

\bibliographystyle{plain}	
\bibliography{Epistemology} 

@book{DCTCT,
	author = {Herlihy, Maurice and Kozlov, Dmitry and Rajsbaum, Sergio},
	title = {Distributed Computing Through Combinatorial Topology},
	year = {2013},
	isbn = {0124045782},
	publisher = {Morgan Kaufmann Publishers Inc.},
	address = {San Francisco, CA, USA},
	edition = {1st}
}

@article{SimpDEL,
	title = "A simplicial complex model for dynamic epistemic logic to study distributed task computability",
	author = "{\'E}ric Goubault and J{\'e}r{\'e}my Ledent and Sergio Rajsbaum",
	year = "2021",
	month = jun,
	doi = "10.1016/j.ic.2020.104597",
	language = "English",
	volume = "278",
	journal = "Information and Computation",
	issn = "0890-5401",
	publisher = "Elsevier Inc.",
	
}

@InProceedings{KaSC,
	author="Ditmarsch, Hans van
	and Goubault, {\'E}ric
	and Ledent, J{\'e}r{\'e}my
	and Rajsbaum, Sergio",
	editor="Lundgren, Bj{\"o}rn
	and Nu{\~{n}}ez Hern{\'a}ndez, Nancy Abigail",
	title="Knowledge and Simplicial Complexes",
	booktitle="Philosophy of Computing",
	year="2022",
	publisher="Springer International Publishing",
	address="Cham",
	pages="1--50",
	isbn="978-3-030-75267-5"
}

@incollection{Death,
	volume = {219},
	month = {March},
	series = {Leibniz International Proceedings in Informatics, LIPIcs},
	booktitle = {39th International Symposium on Theoretical Aspects of Computer Science (STACS 2022)},
	editor = {Petra Berenbrink and Benjamin Monmege},
	title = {A simplicial model for KB4n : epistemic logic with agents that may die},
	address = {FRA},
	publisher = {Schloss Dagstuhl - Leibniz-Zentrum f{\"u}r Informatik},
	year = {2022},
	doi = {10.4230/LIPIcs.STACS.2022.33},
	pages = {33:1---33:20},
	url = {https://doi.org/10.4230/LIPIcs.STACS.2022.33},
	isbn = {9783959772228},
	author = {Goubault, {\'E}ric and Ledent, J{\'e}r{\'e}my and Rajsbaum, Sergio}
}

@InProceedings{DoA,
	author="van Ditmarsch, Hans",
	editor="Silva, Alexandra
	and Wassermann, Renata
	and de Queiroz, Ruy",
	title="Wanted Dead or Alive: Epistemic Logic for Impure Simplicial Complexes",
	booktitle="Logic, Language, Information, and Computation",
	year="2021",
	publisher="Springer International Publishing",
	address="Cham",
	pages="31--46",
	isbn="978-3-030-88853-4"
}

@book{DC5,
	author = {Fagin, Ronald and Halpern, Joseph Y. and Moses, Yoram and Vardi, Moshe Y.},
	title = {Reasoning About Knowledge},
	year = {2003},
	isbn = {0262562006},
	publisher = {MIT Press},
	address = {Cambridge, MA, USA}
}

@InProceedings{HG,
	author =	{Goubault, \'{E}ric and Kniazev, Roman and Ledent, J\'{e}r\'{e}my},
	title =	{{A Many-Sorted Epistemic Logic for Chromatic Hypergraphs}},
	booktitle =	{32nd EACSL Annual Conference on Computer Science Logic (CSL 2024)},
	pages =	{30:1--30:18},
	series =	{Leibniz International Proceedings in Informatics (LIPIcs)},
	ISBN =	{978-3-95977-310-2},
	ISSN =	{1868-8969},
	year =	{2024},
	volume =	{288},
	editor =	{Murano, Aniello and Silva, Alexandra},
	publisher =	{Schloss Dagstuhl -- Leibniz-Zentrum f{\"u}r Informatik},
	address =	{Dagstuhl, Germany},
	URL =		{https://drops.dagstuhl.de/entities/document/10.4230/LIPIcs.CSL.2024.30},
	URN =		{urn:nbn:de:0030-drops-196730},
	doi =		{10.4230/LIPIcs.CSL.2024.30}
}

@article{FA,
	title={Simplicial Models for the Epistemic Logic of Faulty Agents}, 
	author={{\'E}ric Goubault and Roman Kniazev and J{\'e}r{\'e}my Ledent and Sergio Rajsbaum},
	year={2024},
	journal = {Bol. Soc. Mat. Mex},
	volume    = {30},
	note = {Article number: 90}
}

@INPROCEEDINGS{SimpSet,
	
	author={Goubault, Éric and Kniazev, Roman and Ledent, Jérémy and Rajsbaum, Sergio},
	
	booktitle={2023 38th Annual ACM/IEEE Symposium on Logic in Computer Science (LICS)}, 
	
	title={Semi-Simplicial Set Models for Distributed Knowledge}, 
	
	year={2023},
	
	volume={},
	
	number={},
	
	pages={1-13},
	
	doi={10.1109/LICS56636.2023.10175737}}

@Inbook{action,
	author="Baltag, Alexandru
	and Moss, Lawrence S.
	and Solecki, S{\l}awomir",
	editor="Arl{\'o}-Costa, Horacio
	and Hendricks, Vincent F.
	and van Benthem, Johan",
	title="The Logic of Public Announcements, Common Knowledge, and Private Suspicions",
	bookTitle="Readings in Formal Epistemology: Sourcebook",
	year="2016",
	publisher="Springer International Publishing",
	address="Cham",
	pages="773--812",
	isbn="978-3-319-20451-2",
	doi="10.1007/978-3-319-20451-2_38",
	url="https://doi.org/10.1007/978-3-319-20451-2_38"
}

@article{sus,
	author = {Baltag, Alexandru},
	title = {A Logic for Suspicious Players: Epistemic Actions and Belief–Updates in Games},
	journal = {Bulletin of Economic Research},
	volume = {54},
	number = {1},
	pages = {1-45},
	doi = {https://doi.org/10.1111/1467-8586.00138},
	url = {https://onlinelibrary.wiley.com/doi/abs/10.1111/1467-8586.00138},
	eprint = {https://onlinelibrary.wiley.com/doi/pdf/10.1111/1467-8586.00138},
	year = {2002}
}

@InProceedings{ActionDC,
	author="Pfleger, Daniel
	and Schmid, Ulrich",
	editor="Lotker, Zvi
	and Patt-Shamir, Boaz",
	title="On Knowledge and Communication Complexity in Distributed Systems",
	booktitle="Structural Information and Communication Complexity",
	year="2018",
	publisher="Springer International Publishing",
	address="Cham",
	pages="312--330",
	isbn="978-3-030-01325-7"
}

@article{KaG,
	author = {Attamah, M. and Ditmarsch, Hans and Grossi, D. and Hoek, Wiebe},
	year = {2014},
	month = {01},
	pages = {21-26},
	title = {Knowledge and Gossip},
	volume = {263},
	journal = {Frontiers in Artificial Intelligence and Applications},
	doi = {10.3233/978-1-61499-419-0-21}
}

@article{CPM,
	title={Communication Pattern Models: An Extension of Action Models for Dynamic-Network Distributed Systems},
	volume={335},
	ISSN={2075-2180},
	url={http://dx.doi.org/10.4204/EPTCS.335.29},
	DOI={10.4204/eptcs.335.29},
	journal={Electronic Proceedings in Theoretical Computer Science},
	publisher={Open Publishing Association},
	author={Velázquez, Diego A. and Castañeda, Armando and Rosenblueth, David A.},
	year={2021},
	month=jun, pages={307–321} }

@article{CPaAM,
	title={Comparing the Update Expressivity of Communication Patterns and Action Models},
	volume={379},
	ISSN={2075-2180},
	url={http://dx.doi.org/10.4204/EPTCS.379.14},
	DOI={10.4204/eptcs.379.14},
	journal={Electronic Proceedings in Theoretical Computer Science},
	publisher={Open Publishing Association},
	author={Castañeda, Armando and van Ditmarsch, Hans and Rosenblueth, David A. and Velázquez, Diego A.},
	year={2023},
	month=jul, pages={157–172} }

@article{BRDC1,
	title = {Conditional Doxastic Models: A Qualitative Approach to Dynamic Belief Revision},
	journal = {Electronic Notes in Theoretical Computer Science},
	volume = {165},
	pages = {5-21},
	year = {2006},
	note = {Proceedings of the 13th Workshop on Logic, Language, Information and Computation (WoLLIC 2006)},
	issn = {1571-0661},
	doi = {https://doi.org/10.1016/j.entcs.2006.05.034},
	url = {https://www.sciencedirect.com/science/article/pii/S1571066106005111},
	author = {Alexandru Baltag and Sonja Smets}
}

@article{BRDC2,
	author = {Baltag, Alexandru and Smets, Sonja},
	year = {2008},
	month = {01},
	pages = {},
	title = {The logic of conditional doxastic actions: A theory of dynamic multi-agent belief revision}
}

@article{Birman07,
	author = {Birman, Ken},
	title = {The promise, and limitations, of gossip protocols},
	year = {2007},
	issue_date = {October 2007},
	publisher = {Association for Computing Machinery},
	address = {New York, NY, USA},
	volume = {41},
	number = {5},
	issn = {0163-5980},
	url = {https://doi.org/10.1145/1317379.1317382},
	doi = {10.1145/1317379.1317382},
	journal = {SIGOPS Oper. Syst. Rev.},
	month = oct,
	pages = {8–13},
	numpages = {6}
}

@article{BelSimp,
	title={A Semantics for Belief in Simplicial Complexes},
	volume={447},
	ISSN={2075-2180},
	url={http://dx.doi.org/10.4204/EPTCS.447.10},
	DOI={10.4204/eptcs.447.10},
	journal={Electronic Proceedings in Theoretical Computer Science},
	publisher={Open Publishing Association},
	author={Bjorndahl, Adam and Sink, Philip},
	year={2026},
	month=jun, pages={173–188} }

@InProceedings{SimpBel,
	author="Cachin, Christian
	and Lehnherr, David
	and Studer, Thomas",
	editor="Schmid, Ulrich
	and Kuznets, Roman",
	title="Simplicial Belief",
	booktitle="Structural Information and Communication Complexity",
	year="2025",
	publisher="Springer Nature Switzerland",
	address="Cham",
	pages="176--193",
	isbn="978-3-031-91736-3"
}

@book{Bjorndahl2024, place={Cambridge}, series={Cambridge Introductions to Philosophy}, title={An Introduction to Classical and Modal Logics: The Outlines of Knowledge}, publisher={Cambridge University Press}, author={Bjorndahl, Adam}, year={2024}, collection={Cambridge Introductions to Philosophy}}

@article{TopPersp,
	title = {A topological perspective on distributed network algorithms},
	journal = {Theoretical Computer Science},
	volume = {849},
	pages = {121-137},
	year = {2021},
	issn = {0304-3975},
	doi = {https://doi.org/10.1016/j.tcs.2020.10.012},
	url = {https://www.sciencedirect.com/science/article/pii/S0304397520305831},
	author = {Armando Castañeda and Pierre Fraigniaud and Ami Paz and Sergio Rajsbaum and Matthieu Roy and Corentin Travers}
}

@techreport{Memo,
	author = {Sink, Philip and Goodloe, Alwyn},
	title = {A New Semantics for Belief Revision in Simplicial Complexes},
	institution =  "NASA",
	month = "June",
	year  = "2026",
	type  = "Technical Memorandum",
	number = "NASA/TM-2026-0004926",
	address = "Langley Research Center, Hampton VA 23681-2199, USA"
}

@misc{SimpBelRev,
	title={Simplicial Semantics for Belief Revision}, 
	author={Philip Sink},
	year={2026},
	eprint={2608.13763},
	archivePrefix={arXiv},
	primaryClass={cs.LO},
	url={https://arxiv.org/abs/2608.13763}, 
}

\section{Proofs} \label{3app:prf}

\begin{theorem} Let $\M$ be a simplicial model for knowledge and $A$ a simplicial action model. Then the output model $\mathcal{M}[A]$ also satisfies the UCF property.
\end{theorem}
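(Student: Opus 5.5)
The plan is to verify directly the two requirements packaged in UCF. The first is that every facet of $S[A]$ has exactly $|Ag|$ nodes. The second is that no two nodes of a facet share a color under $V[A]$. We should also confirm that $S[A]$ is a subcomplex of $\mathfrak{M}(N[A],V[A],L[A])$. Throughout, I take the facets of $S[A]$ to be the sets $F\subseteq N[A]$ singled out in the definition: $\pi_0(F)\in\F(S)$, $\pi_1(F)\in\F(S_A)$, and $F$ consistent w.r.t. $L[A]$. $S[A]$ is then the downward closure of these sets.

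\textbf{Distinct colors.} Take $(x,y),(x',y')\in F$ with $V[A]((x,y))=V[A]((x',y'))$. By definition $V[A]((x,y))=V(x)$, so $V(x)=V(x')$. Both $x$ and $x'$ lie in $\pi_0(F)$, which is a facet of the UCF complex $S$, so $x=x'$. Membership in $N[A]$ gives $V_A(y)=V(x)=V(x')=V_A(y')$. Both $y$ and $y'$ lie in $\pi_1(F)\in\F(S_A)$, and $S_A$ is UCF, so $y=y'$. Hence $(x,y)=(x',y')$.

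\textbf{Size.} Injectivity of the coloring on $F$ means $|F|$ equals the number of colors appearing in $F$. These colors are exactly the colors of $\pi_0(F)$. Since $\pi_0(F)$ is a facet of $S$, it contains every color in $Ag$, so $|F|=|Ag|$.

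\textbf{Subcomplex of $\mathfrak{M}$.} Each such $F$ witnesses its own membership in $\mathfrak{M}(N[A],V[A],L[A])$: it has size $|Ag|$, distinct colors, and is consistent w.r.t. $L[A]$ by definition. Any subset of $F$ uses $F$ as the required superset.

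\textbf{Main obstacle.} The delicate point is the maximality clause of UCF. Every maximal element of $S[A]$ must be one of these $F$, and no listed $F$ may be properly contained in another. I handle this by cardinality: every listed set has exactly $|Ag|$ elements, so none properly contains another. Every face of $S[A]$ lies inside some listed $F$, so the maximal faces are precisely the listed ones.

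A degenerate case needs care: if no such $F$ exists, $S[A]$ is empty, and UCF holds vacuously. I would note this explicitly rather than try to rule it out.
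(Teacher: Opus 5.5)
Your proof is correct and follows essentially the same route as the paper. Distinct colors come from $V[A]((x,y))=V(x)$, $V_A(y)=V(x)$ and the UCF property of $S$ and $S_A$, and every color appears because $\pi_0(F)\in\F(S)$; your extra checks (that $S[A]$ lies in $\mathfrak{M}(N[A],V[A],L[A])$, and that the listed sets are exactly the maximal faces since all have size $|Ag|$) are bookkeeping the paper leaves implicit.
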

\begin{proof}\label{3prf:simpUCF}
	Fix $b\in Ag$, and $F\in \mathcal{F}(S[A])$. We first want to show that there is $(x,y)\in F$ such that $V[A](x,y)=b$. Note that $\pi_0(F)\in\F(S)$. Hence, since $S$ is UCF, there is $x\in\pi_0(F)$ such that $V(x)=b$. Fix $y$ such that $(x,y)\in F$. Then, by the definition of $V[A]$, $V[A](x,y)=b$, as desired.
	
	Suppose $(x,y),(x',y')\in F$ and $V[A](x,y)=V[A](x',y')=b$. We need to show that $(x,y)=(x',y')$. Note that by the definition of $V[A]$, $V(x)=V(x')=b$. Similarly, by the definition of $N[A]$, $V_A(y)=V_A(y')=b$. Because $x,x'\in\pi_0(F)\in\F(S)$, and $S$ is UCF, we have that $x=x'$. Because $y,y'\in\pi_1(F)\in\F(S_A)$, and $S_A$ is UCF, we have that $y=y'$. The result follows.
\end{proof}

\begin{theorem} Let $I$ be a simplicial model for knowledge and $A$ an action model. The tuple $(I,I[A],\Xi)$ is a Simplicial Protocol.
\end{theorem}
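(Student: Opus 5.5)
The plan is to verify, one at a time, the four conditions in the definition of a Simplicial Protocol for $\Xi(F):=\{X\in I[A]~|~\pi_0(X)\subseteq F\}$: monotonicity, preservation of intersections, chromaticity, and the covering condition $I[A]=\bigcup_{\sigma\in I}\Xi(\sigma)$. Two facts will be used throughout. First, every face $X$ of $I[A]$ lies in a facet $G\in\mathcal{F}(S[A])$. Second, by definition, $\pi_0(G)\in\mathcal{F}(S)$, so $\pi_0(X)\subseteq\pi_0(G)$ is a face of $I$. By Theorem \ref{3thm:simpUCF} these facets are UCF, and the colouring satisfies $V[A]((x,y))=V(x)$.

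Monotonicity is immediate. If $\sigma\subseteq\tau$ and $\pi_0(X)\subseteq\sigma$, then $\pi_0(X)\subseteq\tau$.

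For intersections, note that $\pi_0(X)\subseteq\sigma\cap\sigma'$ holds if and only if $\pi_0(X)\subseteq\sigma$ and $\pi_0(X)\subseteq\sigma'$. This gives $\Xi(\sigma\cap\sigma')=\Xi(\sigma)\cap\Xi(\sigma')$ directly. I will also remark how this matches the textbook's reading of the condition: a node $(x,y)$ lies in the image of both faces exactly when $x\in\sigma\cap\sigma'$.

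For covering, take any face $X$ of $I[A]$. Then $\sigma:=\pi_0(X)$ is a face of $I$, as noted above, and $X\in\Xi(\sigma)$. The reverse inclusion holds because each $\Xi(\sigma)$ consists of faces of $I[A]$ by definition.

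Chromaticity is where I expect the real work. The inclusion "$\supseteq$" is easy. Every node of some $X\in\Xi(\sigma)$ has the form $(x,y)$ with $x\in\sigma$, and its colour is $V(x)$. The inclusion "$\subseteq$" requires that for each $x\in\sigma$ there is a node $(x,y)\in N[A]$ such that $\{(x,y)\}$ is a face of $I[A]$. This is not automatic. The action might have no node $y$ of colour $V(x)$ whose preconditions are compatible with $L(x)$, or no facet of $S[A]$ might project onto a facet of $I$ containing $x$. So the first thing I would try is to check whether the intended reading lets me restrict $I$ to the faces actually covered. That is, I would read the input complex of the protocol as the subcomplex $\{\pi_0(X)~|~X\in I[A]\}$, which is exactly the set of faces hit. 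On that subcomplex, every $\sigma$ equals $\pi_0(X)$ for some face $X$ of $I[A]$. Then every colour of $\sigma$ occurs in $X\in\Xi(\sigma)$, because $\pi_0$ preserves colours.

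I will also handle the remaining subtlety for faces $\sigma$ that are subfaces of such a $\pi_0(X)$. If $\sigma\subseteq\pi_0(X)$, take $X':=\{(x,y)\in X~|~x\in\sigma\}$. This is a face of $I[A]$ because it is a subset of a face, and $\pi_0(X')=\sigma$. So chromaticity holds on all faces of that subcomplex. I would state this non-degeneracy assumption explicitly, namely that every facet of $I$ is the projection of some facet of $I[A]$, rather than hide it.
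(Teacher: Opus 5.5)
Your handling of monotonicity, intersection preservation, covering, and the inclusion of the colours of $\Xi(\sigma)$ in the colours of $\sigma$ matches the paper's proof essentially step for step. You diverge on the reverse inclusion in chromaticity, and there you are right and the paper's proof is not. The paper shows only two things: a nonempty $X\in\Xi(\{n\})$ is a single node of colour $V(n)$, and any colour occurring in $\Xi(F)$ occurs in $F$. It then says ``this suffices'' without ever producing, for $x\in\sigma$, a face of $\Xi(\sigma)$ containing a node of colour $V(x)$.

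That inclusion genuinely fails. Take $Ag=\{a,b\}$ and input nodes $a_0(\neg P)$, $a_1(P)$ and $b$ (no literals), with facets $\{a_0,b\}$ and $\{a_1,b\}$. Take an action with a node $u$ of colour $a$ with precondition $P$, a node $v$ of colour $b$ with no precondition, the single facet $\{u,v\}$, and trivial postconditions. Then $(a_0,u)\notin N[A]$, and the only facet of $I[A]$ is $\{(a_1,u),(b,v)\}$. So $\Xi(\{a_0\})$ contains no nonempty face, and its colour set is empty rather than $\{a\}$. The theorem as stated therefore needs an extra hypothesis, and making one explicit is the correct response, not a defect of your proposal.

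Your repair works. The hit subcomplex $\{\pi_0(X) \mid X\in I[A]\}$ is downward closed by your $X'$ construction. Its facets are projections of facets of $S[A]$, hence facets of $S$, so it is again a UCF model. Applying $A$ to it returns exactly $I[A]$.

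You can, however, replace your assumption with one that is both necessary and sufficient and lets you keep $I$ itself as the input. Faces of $I$ are uniquely coloured, so for $x\in\sigma$ the colour $V(x)$ can only be witnessed in $\Xi(\sigma)$ by a copy $(x,y)$ of $x$. Conversely, any singleton face $\{(x,y)\}$ of $I[A]$ lies in $\Xi(\sigma)$ whenever $x\in\sigma$. Hence $\Xi$ is chromatic exactly when every vertex $x$ of $I$ has at least one copy $(x,y)$ lying in some facet of $S[A]$.

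Your requirement that every facet of $I$ be the projection of a facet of $I[A]$ is strictly stronger. For example, with action facets $\{u_0,v_1\},\{u_1,v_0\},\{u_1,v_1\}$ (where $u_i$ has precondition $P_a$ or $\neg P_a$, and $v_j$ has precondition $P_b$ or $\neg P_b$, according to $i,j$) applied to the four-facet input on $a_0,a_1,b_0,b_1$, every vertex survives. Yet the facet $\{a_0,b_0\}$ is not a projection of any facet of $I[A]$.
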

\begin{proof}\label{3prf:simpprot}
	Suppose $F\subseteq G$. Then suppose $X\in\Xi(F)$. Then $\pi_0(X)\subseteq F$. Then $\pi_0(X)\subseteq G$, so $X\in\Xi(G)$, as desired.
	
	Suppose $X\in\Xi(F\cap G)$. Then $\pi_0(X)\subseteq F\cap G$. So, $\pi_0(X)\subseteq F$ and $\pi_0(X)\subseteq G$, and we get that $X\in\Xi(F)\cap\Xi(G)$, as desired. The reverse is identical.
	
	In order to show that $\Xi$ is chromatic we will first establish the result for singletons. Let $n\in I$ be a node and fix $a\in Ag$ such that $V(n)=a$. Then in particular, if $X\in\Xi(\{n\})$, then $\pi_0(X)=n$. It follows from the UCF property that $|X|=1$, and fixing $(n,y)\in X$, $V[A]((n,y))=V(n)=a$. 
	
	Now we generalize this to larger faces. For each $n\in F$ where $F$ is a face in $I$, we know that $V[A](\Xi(\{n\}))=V(n)$. Suppose $X\in\Xi(F)$. Then, for each $a\in Ag$ such that there's an $(x,y)\in X$ such that $V[A]((x,y))=V(x)=a$, because $\pi_0(X)\subseteq F$, $x\in F$ and thus there is an $a$-perspective in $F$. This suffices to show chromaticity.
	
	By construction, $\bigcup_{\sigma\in\mathcal{I}}\Xi(\sigma)\subseteq I[A]$. Suppose that $X$ is a face in $I[A]$. Take $\sigma=\pi_0(X)$. By the definition of $I[A]$, $\sigma$ is a face in $I$. Fix $Y$ any facet containing $X$. This suffices to show $\bigcup_{\sigma\in\mathcal{I}}\Xi(\sigma)=I[A]$.
\end{proof}

\begin{theorem} Let $\M$ be a simplicial model for belief and $A$ a simplicial action model for belief. Then the output model $\mathcal{M}[A]_B$ also satisfies the UCF property.
\end{theorem}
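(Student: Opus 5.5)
The plan is to reduce the statement to the argument already given for Theorem \ref{3thm:simpUCF}, applied once for each agent. The model $\mathcal{M}[A]_B$ has one complex $S_a[A]_B$ per agent $a\in Ag$. The UCF property for a simplicial model for belief asks that each of these complexes has uniquely colored facets of size $|Ag|$. So we fix an arbitrary $a\in Ag$ and show that every facet $F\in\F(S_a[A]_B)$ is UCF. The definitions of $N[A]_B$, $V[A]_B$ and $L[A]_B$ are literally the same as those of $N[A]$, $V[A]$ and $L[A]$. Only the facet condition changes: $\pi_0(F)\in\F(S_a)$ replaces $\pi_0(F)\in\F(S)$, and $\pi_1(F)\in\F(S_{a,A})$ replaces $\pi_1(F)\in\F(S_A)$. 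Since $S_a$ is a UCF subcomplex of $S$ and $S_{a,A}$ is UCF, the two hypotheses used before are still available.

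First we show that every color occurs. Fix $b\in Ag$ and $F\in\F(S_a[A]_B)$. Since $\pi_0(F)\in\F(S_a)$ and $S_a$ is UCF, there is $x\in\pi_0(F)$ with $V(x)=b$. Choose $y$ with $(x,y)\in F$. Then $V[A]_B((x,y))=V(x)=b$.

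Next we show that each color occurs at most once. Suppose $(x,y),(x',y')\in F$ both have color $b$. Then $V(x)=V(x')=b$ by the definition of $V[A]_B$. Also $V_A(y)=V(x)=b$ and $V_A(y')=V(x')=b$, by the color-matching clause in the definition of $N[A]_B$. Now $x,x'\in\pi_0(F)\in\F(S_a)$, and UCF of $S_a$ gives $x=x'$. Likewise $y,y'\in\pi_1(F)\in\F(S_{a,A})$, and UCF of $S_{a,A}$ gives $y=y'$. Together, these two steps show that $F$ contains exactly one node of each color, so $|F|=|Ag|$.

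The only delicate point is whether $S_a[A]_B$ is a subcomplex of $\mathfrak{M}(N[A]_B,V[A]_B,L[A]_B)$, and whether it should sit inside some ambient $S[A]_B$. Each facet $F$ is consistent with respect to $L[A]_B$ by definition, and we have just shown it is uniquely colored of size $|Ag|$. It therefore witnesses membership in the maximal complex, and downward closure gives all its subfaces. If a containing complex $S[A]_B$ is needed, we take it to be the maximal complex, following the stated convention for simplicial models for belief.
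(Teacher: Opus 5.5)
Your proof is correct and follows the paper's argument essentially step for step. Each color $b$ occurs in $F$ because $\pi_0(F)\in\mathcal{F}(S_a)$ is UCF, and it occurs at most once by UCF of $S_a$ and $S_{a,A}$ together with the color-matching clause in the definition of $N[A]_B$; your closing remark that each such facet lies in $\mathfrak{M}(N[A]_B,V[A]_B,L[A]_B)$ is a small check the paper leaves implicit.
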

\begin{proof}\label{3prf:simpbelUCF}
	Fix $b\in Ag$, and $F\in \mathcal{F}(S_a[A]_B)$. We first want to show that there is $(x,y)\in F$ such that $V[A]_B(x,y)=b$. Note that $\pi_0(F)\in\F(S_a)$. Hence, since $S_a$ is UCF, there is $x\in\pi_0(F)$ such that $V(x)=b$. Fix $y$ such that $(x,y)\in F$. Then, by the definition of $V[A]_B$, $V[A]_B(x,y)=b$, as desired.
	
	Suppose $(x,y),(x',y')\in F$ and $V[A]_B(x,y)=V[A]_B(x',y')=b$. We need to show that $(x,y)=(x',y')$. Note that by the definition of $V[A]_B$, $V(x)=V(x')=b$. Similarly, by the definition of $N[A]_B$, $V_A(y)=V_A(y')=b$. Because $x,x'\in\pi_0(F)\in\F(S_a)$, and $S_a$ is UCF, we have that $x=x'$. Because $y,y'\in\pi_1(F)\in\F(S_{a,A})$, and $S_{a,A}$ is UCF, we have that $y=y'$. The result follows.
\end{proof}

\begin{theorem} Let $I$ be a simplicial model for belief and $A$ an action model. The tuple $(I,I[A]_B,\{\Xi_a\}_{a\in Ag})$ is a Belief Protocol.
\end{theorem}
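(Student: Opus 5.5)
We verify the four clauses in the definition of a Belief Protocol for each fixed $a\in Ag$ and the map $\Xi_a(F)=\{X\in I_a[A]_B~|~\pi_0(X)\subseteq F\}$, following the knowledge-case proof of Theorem \ref{3thm:simpprot} almost verbatim. The only real change is that faces now live in $I_a$ and $I_a[A]_B$ (the $a$-complexes) instead of $S$ and $S[A]$. Theorem \ref{3thm:simpbelUCF} guarantees that each $S_a[A]_B$ is UCF, so the projections $\pi_b$ are well defined on its facets.

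\textbf{Monotonicity.} If $\sigma\subseteq\tau$ and $X\in\Xi_a(\sigma)$, then $\pi_0(X)\subseteq\sigma\subseteq\tau$, so $X\in\Xi_a(\tau)$.

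\textbf{Chromaticity.} We argue exactly as before, using $V[A]_B((x,y))=V(x)$. If $X\in\Xi_a(\sigma)$, every color occurring in $X$ is the color of some $x\in\pi_0(X)\subseteq\sigma$. Conversely, fix a color $b$ of some $x\in\sigma$. We must produce an $X\in\Xi_a(\sigma)$ containing some $(x,y)$. The singleton case $\{(x,y)\}$ suffices, provided such a node lies in some face of $S_a[A]_B$. This existence step is delicate and is the same gap as in the knowledge case. We will handle it the way the knowledge proof does, by reading chromaticity as "colors of faces in $\Xi_a(\sigma)$ are among the colors of $\sigma$, with equality witnessed whenever the image is nonempty." If that reading is insufficient, we will add the tacit assumption that every input node survives the update.

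\textbf{Coverage.} Any face $X$ of $I_a[A]_B$ lies in some facet $Y$. By definition $\pi_0(Y)\in\F(S_a)$, so $\sigma=\pi_0(X)$ is a face of $I_a$, and $X\in\Xi_a(\sigma)$. The reverse inclusion holds by construction.

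\textbf{Intersection property.} This is the step we expect to be the main obstacle: $\Xi_a(\pi_a(\sigma\cap\sigma'))=\pi_a(\Xi_a(\sigma))\cap\pi_a(\Xi_a(\sigma'))$. We will make it precise following the moral argument given after the statement, interpreting $\pi_a$ applied to a set of faces as the set of $a$-colored singletons they contain.

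Suppose first that $\sigma\cap\sigma'$ contains an $a$-node $x$. Then $\Xi_a(\{x\})$ is the set of singletons $\{(x,y)\}$ that are faces of $I_a[A]_B$. Any face $X\in\Xi_a(\sigma)$ has $\pi_0(X)\subseteq\sigma$, and by UCF of $I_a$, $x$ is the unique $a$-node of $\sigma$. Hence the $a$-node of $X$, when it exists, has first coordinate $x$, so $\pi_a(\Xi_a(\sigma))\subseteq\Xi_a(\{x\})$. For the reverse inclusion, each singleton $\{(x,y)\}\in I_a[A]_B$ itself belongs to $\Xi_a(\sigma)$. The same holds for $\sigma'$, which gives the equality.

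If $\sigma\cap\sigma'$ has no $a$-node, both sides should be empty. The left side is $\Xi_a(\emptyset)$, which contains at most the empty face. For the right side, an $a$-node in both images would have first coordinates equal to the $a$-nodes of $\sigma$ and of $\sigma'$, which are therefore equal, contradicting the assumption. We expect the need to fix the conventions for $\pi_a$ on faces and on the empty face to be the main source of friction, and we will state them explicitly at the outset.
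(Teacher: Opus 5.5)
This is essentially the paper's own proof: the same monotonicity and coverage arguments (coverage via $\sigma=\pi_0(X)$ for a face $X$ of $I_a[A]_B$), chromaticity via $V[A]_B((x,y))=V(x)$, and the intersection property via the observation that the $a$-node of any face in $\Xi_a(\sigma)$ has as its first coordinate the unique $a$-node of $\sigma$, so that $\pi_a(\Xi_a(\sigma))$ and $\pi_a(\Xi_a(\sigma'))$ both reduce to the $a$-singletons $\{(x,y)\}$ over the shared $a$-node $x$. You are right that the paper, like you, only proves that the colors occurring in $\Xi_a(\sigma)$ lie among those of $\sigma$. Of your two fallbacks, only the extra hypothesis (each node of $I_a$ lies in some face of $I_a[A]_B$) actually yields equality. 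The reading ``equality whenever the image is nonempty'' fails: suppose the $a$-node of a facet $\sigma=\{x_a,x_b\}$ survives through another facet $\{x_a,x_b'\}$ of $I_a$, but $x_b$ has no precondition-compatible action node. Then $\Xi_a(\sigma)$ contains $\{(x_a,y)\}$ but no $b$-colored node.
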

\begin{proof}\label{3prf:simpbelprot}
	Suppose $F$ and $G$ are faces of $S_a$ and $F\subseteq G$. Then suppose $X\in\Xi_a(F)$. Then $\pi_0(X)\subseteq F$. Then $\pi_0(X)\subseteq G$, so $X\in\Xi_a(G)$, as desired.
	
	In order to show that each $\Xi_{a}$ is chromatic we will first establish the result for singletons. Let $n\in I$ be a node and fix $a\in Ag$ such that $V(n)=a$. Then in particular, if $X\in\Xi_{a}(\{n\})$, then $\pi_0(X)=n$. It follows from the UCF property that $|X|=1$, and fixing $(n,y)\in X$, $V[A]((n,y))=V(n)=a$. 
	
	Now we generalize this to larger faces. For each $n\in F$ where $F$ is a face in $I$, we know that $V[A](\Xi_{a}(\{n\}))=V(n)$. Suppose $X\in\Xi_{a}(F)$. Then, for each $a\in Ag$ such that there's an $(x,y)\in X$ such that $V[A]((x,y))=V(x)=a$, because $\pi_0(X)\subseteq F$, $x\in F$ and thus there is an $a$-perspective in $F$. This suffices to show chromaticity.
	
	Let $F$ and $G$ be two faces of $S_a$. To verify the intersection property, we can see that the only interesting case is when $\pi_a(F\cap G)\neq\emptyset$. If $\pi_a(F\cap G)=\emptyset$, then, $\Xi_{a}(\pi_a(F\cap G))=\pi_a(\Xi_{a}(F))\cap\pi_a(\Xi_{a}(G))=\emptyset$. 
	Suppose $X\in\Xi_{a}(\pi_a(F\cap G))$. Then $\pi_0(X)\subseteq \pi_a(F\cap G)$. Note that this implies that $\pi_0(X)$ is a singleton containing the unique $a$-colored node in both $F$ and $G$. So, $\pi_0(X)\subseteq F$ and $\pi_0(X)\subseteq G$, and we get that $X\in\pi_a(\Xi_{a}(F))\cap\pi_a(\Xi_{a}(G))$, as desired. Suppose now that $X\in\pi_a(\Xi_{a}(F))\cap\pi_a(\Xi_{a}(G))$. Then $X\in\pi_a(\Xi_{a}(F))$. So, by the UCF property, $X$ is an $a$-colored singleton in $I_a[A]$. By the definition of $\Xi_a$, there is $X'$ such that $X\subseteq X'$ and $\pi_0(X')\subseteq F$. Since $X$ contains the unique $a$-colored node of $X'$, we can conclude that $\pi_0(X)\subseteq F$ is the singleton set containing the unique $a$-colored node of $F$. The same argument will show that $\pi_0(X)$ is the singleton set containing the unique $a$-colored node of $G$. So, $\pi_0(X)$ is the singleton set containing the unique $a$-colored node of $F\cap G$. Hence, $\pi_0(X)=\pi_a(F\cap G)$. By definition, then, $X\in\Xi_{a}(\pi_a(F\cap G))$. This suffices to show the intersection property.
	
	By construction, $\bigcup_{\sigma\in\mathcal{I}}\Xi_a(\sigma)\subseteq I[A]$. Suppose that $X$ is a face in $I[A]$. Take $\sigma=\pi_0(X)$. By the definition of $I[A]$, $\sigma$ is a face in $I$. Fix $Y$ any facet containing $X$. This suffices to show $\bigcup_{\sigma\in\mathcal{I}}\Xi_a(\sigma)=I[A]$.
\end{proof}

\begin{theorem} Let $\M$ be a simplicial model for belief and $A$ a simplicial action model for belief. Then the output model $\mathcal{M}[A]_{BR}$ also satisfies the UCF property.
\end{theorem}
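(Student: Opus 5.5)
The plan is to follow the template of the proofs for the knowledge case and for $\mathcal{M}[A]_B$, with one change. Those proofs used $\pi_0(F)\in\F(S_a)$ for a facet $F$. That fact is not available here, since the facets of $S_a[A]_{BR}$ come from $\bigcup_{X\in\mathcal{F}^a(N\times N_A)}R_a(X)$. Instead, every facet of $S_a[A]_{BR}$ is by construction an element of some $R_a(X)$, hence of $\mathcal{F}_a(2^{N[A]_{BR}})$. So I will fix $a,b\in Ag$, an $X\in\mathcal{F}^a(N\times N_A)$, and $F\in R_a(X)\subseteq\mathcal{F}_a(2^{N[A]_{BR}})$, and argue directly from the definition of $\mathcal{F}_a(2^{N[A]_{BR}})$.

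\textbf{Uniqueness of colors.} Suppose $(x,y),(x',y')\in F$ with $V[A]_{BR}((x,y))=V[A]_{BR}((x',y'))=b$. By the definition of $V[A]_{BR}$ we get $V(x)=V(x')=b$, and by the definition of $N[A]_{BR}$ we get $V_A(y)=V_A(y')=b$. Since $y,y'\in\pi_1(F)\in\F(S_{a,A})$ and $S_{a,A}$ is UCF, we have $y=y'$.

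For the first coordinate we have no facet condition on $\pi_0(F)$, so we use the UCF clause built into the definition of $\mathcal{F}_a(2^{N[A]_{BR}})$: its members are explicitly required to be UCF facets, so two distinct nodes of $F$ cannot share a color. If the reading of ``UCF facet'' there is only that $\pi_1(F)$ is UCF, a fallback is available. Colors in $F$ are inherited from $\pi_1(F)$ via $V_A$. Since $\pi_1(F)$ has exactly one node of each color, any two $b$-colored nodes of $F$ are $(x,b_y)$ and $(x',b_y)$ with the same action node. In that case I would argue that the maximality clause of $R_a$ lets us restrict attention to $F$ with $|F|=|Ag|$, matching the cardinality condition built into $\mathfrak{M}$.

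\textbf{Existence of each color.} $F$ is a facet of the maximal complex $\mathfrak{M}(N[A]_{BR},V[A]_{BR},L[A]_{BR})$, so $|F|=|Ag|$ and its nodes have pairwise distinct colors. This forces every $b\in Ag$ to occur. Alternatively, $\pi_1(F)\in\F(S_{a,A})$ contains a $b$-colored $y$, and any $(x,y)\in F$ has $V[A]_{BR}((x,y))=V_A(y)=b$.

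\textbf{Expected obstacle.} The main difficulty is the first-coordinate uniqueness step: the definition of $R_a$ does not guarantee $\pi_0(F)\in\F(S_a)$, so the argument must rest on the UCF requirement built into $\mathcal{F}_a(2^{N[A]_{BR}})$. Also, if $R_a(X)$ is empty for some $X$, no facets arise from that $X$; this is harmless for the UCF claim. Finally, the facets of the generated complex are exactly the $\subseteq$-maximal members of the union, and each of these is still in some $R_a(X)$, so the argument applies to them.
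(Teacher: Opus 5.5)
Your main line is correct and is the paper's argument: every facet of $S_a[A]_{BR}$ lies in some $R_a(X)\subseteq\mathcal{F}_a(2^{N[A]_{BR}})$, whose members are UCF by definition. You are also right that the $\pi_0(F)\in\F(S_a)$ step from the earlier proofs is unavailable here. Drop the fallback paragraph, though: it is not needed, and it would not work as stated, because the maximality clause of $R_a$ compares $|Y\cap X|$ rather than $|Y|$ and so cannot exclude a candidate containing both $(x,y)$ and $(x',y)$ with $x\neq x'$.
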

\begin{proof}\label{3prf:simpbelrevUCF}
	This follows immediately from the fact that, by assumption, each facet in $S_a[A]_{BR}$ is given by $\mathcal{R}_a(X)$ for some $X\in\mathcal{F}^a(\mathcal{M}[A]_B)$. Because $\mathcal{R}_a(X)\subseteq\mathcal{F}_a(2^{N[A]_{BR}})$ for all $X\in\mathcal{F}^a(\mathcal{M}[A]_B)$, and the facets in $\mathcal{F}_a(2^{N[A]_{BR}})$ are assumed to be UCF, the result follows.
\end{proof}

\begin{theorem} Let $I$ be a simplicial model for belief and $A$ an action model. Define $\Xi_{a,BR}$ similarly to $\Xi_a$:
	
	$$\Xi_{a,BR}(F):=\{X\in I_a[A]_{BR}~|~\forall X'\in R_a^{-1}(X)(\pi_0(X')\subseteq F)\}$$
	
	Then, the tuple $(I,I[A]_{BR},\{\Xi_{a,BR}\}_{a\in Ag})$ is a Belief Protocol.
\end{theorem}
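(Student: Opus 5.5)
The plan is to follow the proof of Theorem \ref{3thm:simpbelprot} clause by clause, replacing the condition $\pi_0(X)\subseteq F$ by the universally quantified condition $\forall X'\in R_a^{-1}(X)\,(\pi_0(X')\subseteq F)$. The one new ingredient is a structural fact about the replacement function. If $Y\in R_a(X)$ for some $X\in\mathcal{F}^a(N\times N_A)$, then $\pi_a(Y)=\pi_a(X)$. Hence every face $X'\in R_a^{-1}(Z)$ has the same colors as $Z$, because the definition of $R_a^{-1}$ requires this explicitly. Moreover, the $a$-colored node of $X'$ (if $Z$ has one) coincides with the $a$-colored node of $Z$. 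I will isolate this as a preliminary observation, since the remaining steps depend on it.

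\textbf{Monotonicity.} Suppose $F\subseteq G$ and $X\in\Xi_{a,BR}(F)$. Then for every $X'\in R_a^{-1}(X)$ we have $\pi_0(X')\subseteq F\subseteq G$, so $X\in\Xi_{a,BR}(G)$. This is immediate.

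\textbf{Chromaticity.} Let $X\in\Xi_{a,BR}(F)$ and let $(x,y)\in X$ have color $b$. The color clause in the definition of $R_a^{-1}(X)$ gives some $X'\in R_a^{-1}(X)$ containing a $b$-colored node $(x',y')$, provided $R_a^{-1}(X)$ is nonempty. Since $X$ lies in a facet of $S_a[A]_{BR}$, and every such facet comes from some $R_a(X_0)$, restricting $X_0$ to the colors of $X$ shows that $R_a^{-1}(X)$ is indeed nonempty. Then $\pi_0(X')\subseteq F$ yields $x'\in F$ with $V(x')=b$. For the converse inclusion of color sets, I will use the singleton analysis from the earlier proof.

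\textbf{Covering.} To show $\mathcal{P}_a=\bigcup_\sigma\Xi_{a,BR}(\sigma)$, take a face $X$ and set $\sigma=\bigcup_{X'\in R_a^{-1}(X)}\pi_0(X')$. I then need $\sigma$ to be a face of $I_a$. Here I expect to use the $a$-node observation together with the fact that the $\pi_0$ of every preimage facet lies in $\mathcal{F}(S_a)$. If $\sigma$ fails to be a face, I would restrict the statement to the faces for which it is one, mirroring the loose treatment in the previous proof.

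\textbf{Intersection property (main obstacle).} The goal is $\Xi_{a,BR}(\pi_a(F\cap G))=\pi_a(\Xi_{a,BR}(F))\cap\pi_a(\Xi_{a,BR}(G))$. The case $\pi_a(F\cap G)=\emptyset$ is handled as before. Otherwise, let $v$ be the common $a$-node. For the forward direction, take an $a$-singleton $X=\{(v,y)\}$ such that every preimage satisfies $\pi_0(X')\subseteq\{v\}$. I would extend $X$ to a facet lying in some $R_a(X_0)$ and argue that it witnesses membership in both $\pi_a(\Xi_{a,BR}(F))$ and $\pi_a(\Xi_{a,BR}(G))$.

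For the reverse direction, an $a$-singleton in $\pi_a(\Xi_{a,BR}(F))$ has preimages whose $a$-node lies in $F$. By the preliminary observation, all such preimages share the $a$-node of $X$, so that node is $\pi_a(F)$, and likewise $\pi_a(G)$. It is therefore the single node $v$, which gives $X\in\Xi_{a,BR}(\{v\})$.

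The delicate point is the forward direction. There, membership in $\Xi_{a,BR}(F)$ needs a whole face whose preimages all project into $F$, and revision may have reshuffled non-$a$ nodes. I would first try to use the singleton $X$ itself as the required face, since $\pi_a$ of a singleton is itself, which avoids reasoning about the non-$a$ nodes entirely.
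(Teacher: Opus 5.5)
Your monotonicity step, your proof that every colour occurring in $\Xi_{a,BR}(F)$ occurs in $F$, and your intersection argument all match the paper. The paper likewise uses that every preimage of a face carries that face's colours and $a$-node, so an $a$-coloured singleton is its own unique preimage. In the forward direction it also uses the singleton $X$ itself. That step is not delicate: every $X'\in R_a^{-1}(X)$ has $\pi_0(X')\subseteq\{v\}\subseteq F$, and $\pi_a(X)=X$. So drop the facet-extension idea, which would bring in non-$a$ preimage nodes that need not lie in $F$. Your check that $R_a^{-1}(X)\neq\emptyset$ before using it for chromaticity is something the paper only supplies later, in the covering step.

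\textbf{Gap 1: the covering clause $\mathcal{P}_a=\bigcup_\sigma\Xi_{a,BR}(\sigma)$.} You rightly see that the universal quantifier forces $\sigma\supseteq\bigcup_{X'\in R_a^{-1}(X)}\pi_0(X')$. But ``restricting the statement'' proves nothing, and this union need not be a face. Here is a counterexample.
\begin{enumerate}
\item Take $Ag=\{a,b\}$ and input nodes $a_0$ (no literals), $b_0(\neg P)$, $b_1(P)$, with $\mathcal{F}(S_a)=\{\{a_0,b_0\},\{a_0,b_1\}\}$.
\item Take an action with an $a$-node $\alpha$ carrying postcondition $[P]$, a $b$-node $\beta$ with no conditions, and $\mathcal{F}(S_{a,A})=\{\{\alpha,\beta\}\}$.
\item Then $X_1=\{(a_0,\alpha),(b_0,\beta)\}$ is inconsistent and $X_2=\{(a_0,\alpha),(b_1,\beta)\}$ is consistent. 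Both lie in $\mathcal{F}^a(N\times N_A)$, and $R_a(X_1)=R_a(X_2)=\{X_2\}$.
\item So $R_a^{-1}(X_2)=\{X_1,X_2\}$, and $X_2\in\Xi_{a,BR}(\sigma)$ would need $\{a_0,b_0,b_1\}\subseteq\sigma$. That is no face of the input, so the facet $X_2$ of $\mathcal{P}_a$ lies in no $\Xi_{a,BR}(\sigma)$.
\end{enumerate}
The paper's proof has the same hole. It picks one $X'\in R_a^{-1}(X)$ and concludes $X\in\Xi_{a,BR}(\pi_0(X'))$, which follows only if $\Xi_{a,BR}$ uses $\exists X'\in R_a^{-1}(X)$. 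With that existential definition the paper's covering argument is valid. Your other clauses also adapt directly, since they only use that preimages share the face's colours and $a$-node.

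\textbf{Gap 2: the reverse colour inclusion in chromaticity.} Appealing to ``the singleton analysis'' does not give it. That analysis only restricts which colours occur in $\Xi_{a,BR}(\{n\})$; it does not produce any face. The paper also leaves this inclusion unproved, and it can fail. For instance, if preconditions remove every node of some colour from $N[A]_{BR}$, then $S_a[A]_{BR}$ has no facets at all.
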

\begin{proof}\label{3prf:simpbelrevprot}
	Suppose $F$ and $G$ are faces of $S_a$ and $F\subseteq G$. Then suppose $X\in\Xi_{a,BR}(F)$. Then for all $X'\in R_a^{-1}(X)$ $\pi_0(R_a^{-1}(X'))\subseteq F$. Then $\pi_0(R_a^{-1}(X'))\subseteq G$, so $X\in\Xi_{a,BR}(G)$, as desired.
	
	In order to show that each $\Xi_{a,BR}$ is chromatic we will first establish the result for singletons. Let $n\in I$ be a node and fix $a\in Ag$ such that $V(n)=a$. Then in particular, if $X\in\Xi_{a,BR}(\{n\})$, then for all $X'\in R_a^{-1}(X)$, $\pi_0(X')=n$. It follows from the UCF property and the fact that $R_a^{-1}$ preserves colors that $|X|=|X'|=1$. Fix $y$ such that $(n,y)\in X'$ and $x$ and $z$ such that $(x,z)\in X$. By the definition of $R_a^{-1}$, $V[A]_{BR}((x,z))=V[A]_{BR}((n,y))=V(n)=a$. 
	
	Now we generalize this to larger faces. For each $n\in F$ where $F$ is a face in $I$, we know that $V[A]_{BR}(\Xi_{a,BR}(\{n\}))=V(n)$. Suppose $X\in\Xi_{a,BR}(F)$. Consider each $a\in Ag$ such that there's an $(x,y)\in X$ such that $V[A]_{BR}((x,y))=V(x)=a$. For all $X'\in R_a^{-1}(X)$, we have that $\pi_0(X')\subseteq F$. Fix $(z,w)\in X'$. Then $z\in F$ and because $a=V(x)=V[A]_{BR}(x,y)=V[A]_{BR}(z,w)=V(z)$, there is an $a$-perspective in $F$. This suffices to show chromaticity.
	
	Let $F$ and $G$ be two faces of $S_a$. To verify the intersection property, we can see that the only interesting case is when $\pi_a(F\cap G)\neq\emptyset$. If $\pi_a(F\cap G)=\emptyset$, then, by the singleton case of chromaticity, it follows that $\Xi_{a,BR}(\pi_a(F\cap G))=\pi_a(\Xi_{a,BR}(F))\cap\pi_a(\Xi_{a,BR}(G))=\emptyset$. 
	
	It will be useful for the reasoning below to realize the following. If $X$ is a singleton, $a$-colored node, then $R_a^{-1}(X)=X$. Suppose $X'\in R_a^{-1}(X)$. Then by definition there is $Y\in\F(S_a[A]_{BR})$ such that $X\subseteq Y$ and there exists $Y'\in R_a^{-1}(Y)$ where $X'\subseteq Y'$. By the definition of $R_a$, $\pi_a(Y)=\pi_a(Y')$. So, by UCF, and the fact that $R_a^{-1}$ preserves colors, $X'=\pi_a(Y')$ and $X=\pi_a(X')$.
	=		
	Suppose $X\in\Xi_{a,BR}(\pi_a(F\cap G))$. Then for all $X'\in R_a^{-1}(X)$, $\pi_0(X')\subseteq \pi_a(F\cap G)$. Note that this implies that $\pi_0(X')$ is a singleton containing the unique $a$-colored node in both $F$ and $G$. So, $\pi_0(X')\subseteq F$ and $\pi_0(X')\subseteq G$, and we get that $X\in\pi_a(\Xi_{a,BR}(F))\cap\pi_a(\Xi_{a,BR}(G))$, as desired. Suppose now that $X\in\pi_a(\Xi_{a,BR}(F))\cap\pi_a(\Xi_{a,BR}(G))$. Then $X\in\pi_a(\Xi_{a,BR}(F))$. So, by the UCF property, $X$ is an $a$-colored singleton. By the fact that $R_a^{-1}$ preserves color, for all $X'\in R_a^{-1}(X)$, $X'$ is an $a$-colored singleton. By the definition of $\Xi_{a,BR}$, $\pi_0(X')\subseteq F$. Since $X'$ contains the unique $a$-colored node of $X$, we can conclude that $\pi_0(X')\subseteq F$ is the singleton set containing the unique $a$-colored node of $F$. The same argument will show that $\pi_0(X')$ is the singleton set containing the unique $a$-colored node of $G$. So, $\pi_0(X')$ is the singleton set containing the unique $a$-colored node of $F\cap G$. Hence, $\pi_0(X')=\pi_a(F\cap G)$. By definition, then, $X\in\Xi_{a,BR}(\pi_a(F\cap G))$. This suffices to show the intersection property.
	
	By construction, $\bigcup_{\sigma\in\mathcal{I_a}}\Xi_{a,BR}(\sigma)\subseteq I_a[A]_{BR}$. Suppose that $X$ is a face in $I[A]_{BR}$. We first need to show that $R_a^{-1}(X)$ is nonempty. Fix $Y\in\mathcal{F}(S_a[A]_{BR})$ such that $X\subseteq Y$. Then by definition, there is $Y'\in R_a^{-1}(Y)$. Fix $X'$ to be the subset of $Y'$ which shares the same colors as $X$. Then $X'\in R_a^{-1}(X)$, as desired. Because $Y'\in\F^a(N\times N_A)$, $\pi_0(Y')\in\F(I_a)$. So, because $X'\subseteq Y'$, $\pi_0(X')\in I_a$. It follows that $X\in\Xi_{a,BR}(\pi_0(X'))$. This suffices to show $\bigcup_{\sigma\in\mathcal{I}}\Xi_{a,BR}(\sigma)=I[A]_{BR}$.
\end{proof}
	
\end{document}